\documentclass[11pt]{article}
\usepackage[utf8]{inputenc}
\usepackage{palatino}
\usepackage{fullpage}
\usepackage[backref=page]{hyperref}
\hypersetup{
    unicode=false,          
    colorlinks=true,        
    linkcolor=blue,         
    citecolor=purple,       
    filecolor=magenta,      
    urlcolor=cyan           
}

\usepackage{algorithm}
\usepackage[noend]{algorithmic}
\usepackage{booktabs}       
\usepackage{nicefrac}       
\usepackage{microtype}      
\usepackage{xcolor}         
\usepackage{color}
\usepackage{mathtools}
\usepackage{amsfonts}
\usepackage{amsthm}
\usepackage{amssymb}
\usepackage{url}
\usepackage[capitalize,noabbrev,nameinlink]{cleveref}
\usepackage{multicol, multirow}
\usepackage{xfrac}
\renewcommand{\algorithmiccomment}[1]{\bgroup\hfill$\rhd$~\footnotesize{\textcolor{violet}{#1}}\egroup}
\usepackage[group-separator={,}, group-minimum-digits=4]{siunitx}
\sisetup{
  detect-weight = true,
  detect-family = true,
  mode = text 
}
\usepackage[font=small]{caption}
\usepackage{enumitem}
\usepackage{subcaption}

\theoremstyle{plain}
\newtheorem{theorem}{Theorem}[section]
\newtheorem{proposition}[theorem]{Proposition}
\newtheorem{lemma}[theorem]{Lemma}
\newtheorem{corollary}[theorem]{Corollary}
\theoremstyle{definition}
\newtheorem{definition}[theorem]{Definition}

\theoremstyle{remark}

\newcommand{\eps}{\varepsilon}
\newcommand{\ouralgo}{\texttt{MAD}}
\newcommand{\ouralgolong}{\texttt{MaxAdaptiveDegree}}
\newcommand{\ouralgotworounds}{\texttt{MAD2R}}
\newcommand{\ouralgotworoundslong}{\texttt{MaxAdaptiveDegreeTwoRounds}}
\newcommand{\userweights}{\texttt{UserWeights}}
\newcommand{\selectalgo}{\texttt{WeightAndThreshold}}
\newcommand{\weightalgo}{\texttt{ALG}}
\newcommand{\basicalgo}{\texttt{Basic}}
\newcommand{\sets}{\mathcal{S}}

\newcommand{\dmax}{d_{max}}
\newcommand{\bmin}{b_{min}}
\newcommand{\bmax}{b_{max}}

\newcommand{\U}{\mathcal{U}}
\newcommand{\pluseq}{\mathrel{+}=}
\newcommand\err[1]{{\scriptstyle (\pm #1)}}

\newcommand{\RETURN}{\STATE \textbf{return} }

\newcommand{\DeclareAutoPairedDelimiter}[3]{%
  \expandafter\DeclarePairedDelimiter\csname Auto\string#1\endcsname{#2}{#3}%
  \begingroup\edef\x{\endgroup
    \noexpand\DeclareRobustCommand{\noexpand#1}{%
      \expandafter\noexpand\csname Auto\string#1\endcsname*}}%
  \x}
\DeclareAutoPairedDelimiter\p{\lparen}{\rparen}
\DeclareAutoPairedDelimiter\br{\lbrack}{\rbrack}
\DeclareAutoPairedDelimiter\bc{\lbrace}{\rbrace}
\DeclareAutoPairedDelimiter\angle{\langle}{\rangle}
\DeclareAutoPairedDelimiter\ceil{\lceil}{\rceil}
\DeclareAutoPairedDelimiter\floor{\lfloor}{\rfloor}
\DeclareAutoPairedDelimiter\abs{\lvert}{\rvert}   
\DeclareAutoPairedDelimiter\norm{\lVert}{\rVert}  

\newif\iftodos
\todosfalse

\iftodos
\newcommand\todo[1]{\textcolor{red}{TODO: #1}}
\newcommand\morteza[1]{\textcolor{olive}{Morteza: #1}}
\newcommand\mortezaEdit[1]{\textcolor{olive}{#1}}
\newcommand\justin[1]{\textcolor{blue}{Justin: #1}}
\newcommand\ale[1]{\textcolor{cyan}{Ale: #1}}
\newcommand\vincent[1]{\textcolor{pink}{V: #1}}
\else 
\newcommand\todo[1]{}
\newcommand\morteza[1]{}
\newcommand\mortezaEdit[1]{}
\newcommand\justin[1]{}
\newcommand\ale[1]{}
\newcommand\vincent[1]{}
\fi

\title{Scalable Private Partition Selection via Adaptive Weighting}

\author{%
  Justin Y.\ Chen\thanks{Work done as a student researcher at Google Research.} \\
  MIT \\
  \texttt{justc@mit.edu} \\
  \and
  Vincent Cohen-Addad \\
  Google Research\\
  \texttt{cohenaddad@google.com} \\
  \and
  Alessandro Epasto\\
  Google Research \\
  \texttt{aepasto@google.com} \\
  \and
  Morteza Zadimoghaddam \\
  Google Research \\
  \texttt{zadim@google.com} \\
}

\begin{document}

\maketitle

\begin{abstract}
In the differentially private partition selection problem (a.k.a. private set union, private key discovery), users hold subsets of items from an unbounded universe. The goal is to output as many items as possible from the union of the users' sets while maintaining user-level differential privacy. Solutions to this problem are a core building block for many privacy-preserving ML applications including vocabulary extraction in a private corpus, computing statistics over categorical data and learning embeddings over user-provided items.  
We propose an algorithm for this problem, \texttt{MaxAdaptiveDegree} (\texttt{MAD}), which adaptively reroutes weight from items with weight far above the threshold needed for privacy to items with smaller weight, thereby increasing the probability that less frequent items are output.  Our algorithm can be efficiently implemented in massively parallel computation systems allowing scalability to very large datasets. We prove that our algorithm stochastically dominates the standard parallel algorithm for this problem. We also develop a two-round version of our algorithm, \texttt{MAD2R}, where results of the computation in the first round are used to bias the weighting in the second round to maximize the number of items output. In experiments, our algorithms provide the best results among parallel algorithms and scale to datasets with hundreds of billions of items, up to three orders of magnitude larger than those analyzed by prior sequential algorithms.
\end{abstract}

\setcounter{page}{0}
\thispagestyle{empty}

\newpage
\tableofcontents
\setcounter{page}{0}
\thispagestyle{empty}

\newpage

\section{Introduction}\label{sec:intro}

The availability of large amounts of user data has been one of the driving factors for the widespread adoption and rapid development of modern machine learning and data analytics. 
Consider the example of a system releasing information on the queries asked to a search engine over a period of time~\cite{korolova2009releasing, bavadekar2021google}. Such a system can provide valuable insights to researchers and the public (for instance on health concerns~\cite{bavadekar2021google}) but care is needed in ensuring that the queries output do not leak private and sensitive user information.

In this paper, we focus on the problem of {\it private partition selection}~\cite{desfontaines2022dppartition,gopi2020dpunion} which models the challenge of  extracting as much data as possible from such a dataset while respecting user privacy. 
More formally, the setting of the problem  (which is also known as private set union or private key discovery) is that each user has a private subset of items (e.g., the queries issued by the user) from an unknown and unbounded universe of items (e.g., all strings). The goal is to output as many of the items in the users' sets as possible (i.e., the queries issues by the users), while  providing a strong notion of privacy---User-level Differential Privacy (DP)~\cite{dwork2014book}.

Private partition selection models many challenges beyond the example above, including the problem  of extracting the vocabulary (words, tokens or $n$-gram) present in a private corpus~\cite{zhang2022federated,kim2021differentially}. This task is a fundamental prerequisite for many privacy-preserving natural language processing algorithms~\cite{wilson2019differentially,gopi2020dpunion}, including for training language models for sentence completion and response generation for emails~\cite{kim2021differentially}. Similarly,  learning embedding models over categorical data often requires to identify the categories present in a private dataset~\cite{ghazi2023sparsitypreserving, ghazi2024dpsparsegrad}.
Partition selection underpins many other applications including analyzing private streams~\cite{cardoso2022differentially,zhang2023differentially}, learning sparse histograms~\cite{boneh2021lightweight}, answering SQL queries~\cite{desfontaines2022dppartition} and sparsifying the gradients in the DP SGD method~\cite{ghazi2023sparsitypreserving}.
Unsurprisingly given these applications, private partition selection algorithms \cite{korolova2009releasing} are a core building block of many standard differentially private libraries e.g., PyDP~\cite{pydp_partition}, Google's DP Libraries~\cite{googledplibrary,amin2022plume}, and OpenMined DP Library~\cite{openmined}. 

Real-world datasets for these applications can be massive, potentially containing hundreds of billions of data points, thus requiring algorithms for partition selection that can be efficiently run in large-scale data processing infrastructures---e.g., MapReduce~\cite{dean2004mapreduce}, Hadoop~\cite{hadoop},  Spark~\cite{zaharia2016spark}. In our work, we design a highly parallelizable algorithm for this problem which requires constant parallel rounds in the Massively Parallel Computing model~\cite{karloff2010model} and does not assume to fit the input in memory.  This contrasts to prior algorithms such as \cite{gopi2020dpunion, carvalho2022incorporatingitem} which all require (with the key exception of the uniform weighting method described below) to process all the data sequentially on a single machine and assume storing the input in-memory thus precluding efficient parallelization.

\subsection{Weight and Threshold Approach}
Before introducing our algorithm, we review the popular weighting-based approach to partition selection which is used in many algorithms~\cite{korolova2009releasing, gopi2020dpunion, carvalho2022incorporatingitem, swanberg2023dpsips}. This approach is of interest in the context of large-scale data as some of its variants can be parallelized efficiently~\cite{korolova2009releasing,swanberg2023dpsips}. 

Notice that differential privacy imposes to not output any item which is owned by only a single user. However, it is possible for a private algorithm to output items which appear in {\it many} different sets. This intuition is at the basis of the weighting-based algorithms.

Algorithms in this framework start by subsampling each user's set to bound the maximum number of items per user. Then, these algorithms proceed by increasing, for each user, the weight associated the items present in the user data. Finally, the algorithm adds Gaussian (or Laplace) noise to the total accumulated weight of each item, and outputting all items with noised weight above a certain threshold~\cite{korolova2009releasing, gopi2020dpunion, carvalho2022incorporatingitem, swanberg2023dpsips}. The amount of noise and value of the threshold depends on the privacy parameters and  \emph{crucially} on the sensitivity of the weighting function to the addition or removal of any individual user's set. Loosely speaking, the contribution of each user to the item weights must be bounded in order to achieve differential privacy. Algorithms within this framework differ in the choice of how to assign item weights, but in all designs the key goal is that of limiting the sensitivity of the weighting function.

A basic strategy is {\it uniform weighting}~\cite{korolova2009releasing} where each user  contributes equal weight to each of the items in their set. It is easy to bound the sensitivity of this basic weighting and thus to prove differential privacy. Because of its simplicity, the basic uniform weighting algorithm is extremely parallelizable requiring only basic counting operations over the items in the data.  

Unfortunately, however, uniform weighting is lossy in that it may overallocate weight far above the threshold to high frequency items, missing an opportunity to boost the weight of items closer to the decision boundary. This has inspired the design of greedy weighting schemes such as~\cite{gopi2020dpunion, carvalho2022incorporatingitem} where each user's allocations depend on data of previously analyzed users. All of these algorithms are inherently sequential and require memory proportional to the items present in the data. 

To our knowledge, the uniform weighting~\cite{korolova2009releasing} is essentially the only known solution to the private partition selection problem which is amenable to implementation in a massively parallel computation framework. The sole exception is the scalable, iterative partition selection (DP-SIPS) scheme of~\cite{swanberg2023dpsips} which has as core computation  repeated invocations of the uniform weighting algorithm.

\subsection{Our Contributions}

In this work, we design the first, adaptive, non-uniform weighting algorithm that is amenable to massively parellel implementations. Our algorithm, called \ouralgolong{} (\ouralgo{}), requires linear work in the size of the input and can be implemented in a constant number of rounds in a parallel framework. 
From a technical point of view, the algorithm is based on a careful rerouting of overallocated weight to less frequent items, that together with a delicate sensitivity analysis shows no privacy loss compared to uniform weighting. This means that---given the same privacy parameters---both algorithms utilize exactly the same amount of noise and the same threshold (but our algorithm can better allocate the weight). As a result, we are able to prove that our algorithm stochastically dominates the basic, uniform weighting strategy.

We extend our result to multiple rounds in \ouralgotworoundslong{} (\ouralgotworounds{}), splitting our privacy budget across the rounds, running \ouralgo{} in each round, and outputting the union of items found in both rounds.
Similar to DP-SIPS, in the second round, we remove from the input any items found in the first round (this is private by post-processing).
By a careful generalization of the privacy analysis of the weight and threshold approach, we show that it is possible to also use the noisy weights from the prior round.
We leverage this in two ways. First, we additionally remove items which have very small weights from the first round--these have little chance of being output in the second round. Second, we bias the weighting produced by \ouralgo{} in the second round to further limit overallocation to items which received large weights in the first round. The combination of these ideas yields significant empirical improvements over both the basic algorithm and DP-SIPS.

In \ouralgo{}, users with a too small or large of a cardinality (we equivalently refer to this as the user's degree) are labeled non-adaptive: these users will add uniform weight to their items (or biased weight in the case of \ouralgotworounds{}). 
The rest of the users participate in adaptive reweighting with the privacy analysis making use of the upper bound on their degree. Initially each of these users sends a small amount of weight uniformly among their items (the total amount of weight sent per user is bounded by $1$ rather than the square root of their degree, which is the case for the basic weighting algorithm). Then, items with weight significantly above the threshold are truncated to only have weight slightly above the threshold (we do not want to truncate all the way to the threshold as the added noise can decrease weights). The weight removed via truncation is returned to each user proportional to their initial contributions. Then, users reroute a carefully chosen fraction of this ``excess'' weight back to their items. Finally, users add additional uniform weight to their items to make up for the small amount of weight that was initially sent.

Bounding the sensitivity of \ouralgo{} requires a careful analysis and is significantly more involved than for basic weighting. Several design choices made in our algorithm, such as using an initial uniform weighting inversely proportional to cardinality rather than square root of cardinality, using a minimum and maximum adaptive degree, and choosing the fraction of how much excess weight to reroute are all required for the following theorem to hold. 
Furthermore, we generalize the analysis of the weight and threshold paradigm to allow us to use noisy weights from first round in biasing weights in the second round of \ouralgotworounds{}.
This biasing further complicates the sensitivity analysis of \ouralgo{} which we address by putting limits on the minimum and maximum bias.

\begin{theorem}[Privacy, Informal version of \cref{thm:dp-two-rounds} and \cref{thm:dp-one-round}]
Using \ouralgo{} as the weighting algorithm achieves $(\eps, \delta)$-DP with the exact same noise and threshold parameters as the basic algorithm.
Running \ouralgo{} in two rounds with biases via \ouralgotworounds{} is $(\eps, \delta)$-DP.
\end{theorem}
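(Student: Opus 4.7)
The plan is to prove the two parts of the theorem separately: first establish that a single invocation of \ouralgo{} is $(\eps,\delta)$-DP with the same noise and threshold parameters as the basic uniform-weighting algorithm, then obtain \ouralgotworounds{} by composing two such invocations with a suitably generalized weight-and-threshold mechanism that can use noisy prior-round weights as biases. In both cases, the heart of the argument is a sensitivity bound on the weight vector produced by the weighting step; once the sensitivity is controlled, privacy of the final thresholded, noised output follows from the standard Gaussian (or Laplace) mechanism plus the usual tail analysis of the threshold, exactly as in the basic algorithm.

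For the single-round claim I would fix two neighboring inputs differing in a single user $u$ and bound the $\ell_2$ change in the weight vector by the same quantity that uniform weighting achieves. Non-adaptive users contribute nothing new to the analysis, so the real work is in the adaptive branch. I would decompose a user's contribution to an item as the sum of (i) the small initial uniform send of total mass at most $1$ spread over the user's items, (ii) the rerouted portion of the returned excess from any truncated items the user initially touched, and (iii) the final uniform top-up that compensates for the small initial send. The minimum and maximum adaptive-degree cutoffs ensure each component is individually bounded: the upper bound keeps the per-item initial allocation from being too small (so the returned excess per user cannot blow up), and the lower bound prevents the top-up from exceeding the basic-weighting mass. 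The key lemma-level claim I would isolate is that the rerouting step is a contraction in $\ell_2$: the excess returned to user $u$ is bounded by what $u$ originally sent, and the carefully chosen rerouting fraction keeps the contribution from cascading into a super-unit sensitivity.

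For \ouralgotworounds{} the outer structure is basic composition: split the budget as $(\eps_1,\delta_1)+(\eps_2,\delta_2)$, run \ouralgo{} in round one, and use the noisy per-item weights from round one both to remove already-output or obviously-hopeless items from the second-round input and to bias the second-round initial allocation. Removal of already-output items is pure post-processing, but using noisy first-round weights as biases inside the second-round weighting means the round-two weighting function depends on additional DP-released information. I would generalize the weight-and-threshold analysis to allow such a bias vector by treating it as a data-independent parameter from the viewpoint of round two, then bounding the round-two sensitivity with respect to $u$ \emph{uniformly} over all admissible bias vectors. Enforcing a minimum and maximum bias is essential here: it caps how much biased weight a single user can redirect to one item, so the round-two sensitivity remains bounded. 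Composition then yields the claimed $(\eps,\delta)$-DP guarantee.

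The main obstacle I anticipate is the sensitivity analysis of the adaptive rerouting step, because removing user $u$ alters the truncation amounts on items $u$ touched, which changes the excess returned to \emph{every other} user that touched those items, which in turn changes how much those users reroute to their \emph{other} items. Turning this cascading chain of dependencies into a clean $\ell_2$ bound without picking up a factor in the degree or cardinality is the delicate step, and this is where the specific design choices---initial allocation proportional to $1/|S_u|$ rather than $1/\sqrt{|S_u|}$, the particular rerouting fraction, and the min/max adaptive degree---must all conspire correctly. For \ouralgotworounds{} an added subtlety is ensuring the bias-aware sensitivity bound is genuinely uniform over all admissible bias vectors so that composition proceeds without an extra union bound in $\delta$.
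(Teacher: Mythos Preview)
Your high-level structure---reduce privacy to sensitivity bounds on the weighting step, then invoke the generic Gaussian-mechanism-plus-threshold argument, and compose for two rounds---matches the paper. However, two concrete pieces differ from what the paper actually does, and one of them is a genuine gap in your plan.

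First, you only discuss the $\ell_2$ sensitivity. The paper needs a \emph{separate} bound on the novel $\ell_\infty$ sensitivity (the maximum weight placed on any item unique to the added user) to justify using the same threshold $\rho$ as \basicalgo{}. This is not subsumed by the $\ell_2$ bound: a novel item could in principle receive rerouted weight from the user's other items, and one must show its total weight stays at most $1/\sqrt{t}$ (or $\bmax/\sqrt{t}$ with biases). The paper handles this in a dedicated lemma; your plan does not mention it.

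Second, your proposed mechanism for controlling the cascade---``rerouting is a contraction in $\ell_2$'' and ``the excess returned to user $u$ is bounded by what $u$ originally sent''---is not how the paper closes the argument, and I am skeptical it works as stated. The difficulty is that adding user $v$ changes the excess returned to \emph{every other} user who shares an above-threshold item with $v$, and the change for any one such user is not obviously bounded by anything that user sent. The paper instead proves a global conservation identity: the \emph{total} change in excess summed over all users equals the excess $\gamma$ that $v$ alone creates, which is at most $1$ since $v$'s initial send has $\ell_1$ mass $1$. From this, $\|\Delta_{\mathrm{reroute}}\|_1 \le \alpha\gamma$ and $\|\Delta_{\mathrm{reroute}}\|_\infty \le \alpha\gamma/\dmax$ (the latter because each rerouted increment is discounted by $\alpha/\dmax$), and H\"older's inequality gives $\|\Delta_{\mathrm{reroute}}\|_2 \le \alpha\gamma/\sqrt{\dmax}$. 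The remaining difference $\Delta_{\mathrm{user}}$ is supported only on $S_v$ and is bounded by a direct calculation; triangle inequality then forces the choice of $\alpha = \bmin - 1/(2\sqrt{\dmax})$. So the roles of the design parameters are also somewhat different from your description: $\dmax$ enters via the $\alpha/\dmax$ discount (controlling the $\ell_\infty$ spread of rerouted mass), not via keeping per-item initial allocations large, and the minimum adaptive degree $\lceil 1/\bmin^2\rceil$ is needed so that a monotonicity step in the $\ell_2$ calculation goes through. Your two-round treatment (post-processing plus uniform-over-biases sensitivity) is correct and matches the paper.
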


Within \ouralgo{}, items have their weight truncated if it exceeds an ``adaptive threshold'' $\tau$ after adding the initial weights. $\tau$ is set to be $\beta$ standard deviations of the noise above the true threshold that will be used to determine the output where $\beta \geq 0$ is a free parameter of the algorithm. By design, before adding noise, every item which receives at least weight $\tau$ in the basic algorithm will also receive weight at least $\tau$ by \ouralgo{}. Furthermore, the weights on all other items will only be increased under \ouralgo{} compared to the basic algorithm. Taking the final step of adding noise, we show the following theorem.

\begin{theorem}[Stochastic Dominance, Informal version of \cref{thm:maxalgo-perf}]\label{thm:maxalgo-perf-informal}
Let $U$ be the set of items output when using the basic algorithm and let $U^*$ be the set of items output when using \ouralgo{} as the weighting algorithm.
Then, for items $i \in \U$ and a free parameter $\beta \geq 0$,
\begin{itemize}
    \item If $\Pr\p{i \in U} < \Phi(\beta)$, then $\Pr\p{i \in U^*} \geq \Pr\p{i \in U}$.
    \item Otherwise, $\Pr\p{i \in U^*} \geq \Phi(\beta)$.
\end{itemize}
where $\Phi$ is the standard Gaussian cdf.
\end{theorem}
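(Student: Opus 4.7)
My plan is to reduce the theorem to a pointwise, pre-noise comparison of item weights and then compute Gaussian tail probabilities. Let $w_i$ denote the pre-noise weight that item $i$ accumulates under the basic uniform algorithm and $w^*_i$ the pre-noise weight it accumulates under \ouralgo{}, computed on the same input and the same (coupled) subsampling. Both algorithms add i.i.d.\ Gaussian noise $Z_i \sim \mathcal{N}(0,\sigma^2)$ and compare to the same threshold $T$; internally \ouralgo{} uses the adaptive threshold $\tau = T + \beta\sigma$. The theorem will then follow from the pointwise invariant $(\star)$: for every item $i$ and every realization of the data, (a) if $w_i < \tau$ then $w^*_i \ge w_i$, and (b) if $w_i \ge \tau$ then $w^*_i \ge \tau$.

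To verify $(\star)$ I would decompose each user's contribution to item $i$ into a ``uniform-style'' part and a nonnegative ``rerouted'' part. Non-adaptive users (those with degree outside $[\dmin,\dmax]$) contribute exactly as in the basic algorithm, i.e., $1/\sqrt{d_u}$ per item. An adaptive user $u$ of degree $d_u$ initially contributes $1/d_u$ and at the end adds a top-up of $1/\sqrt{d_u}-1/d_u$, yielding the same total uniform-style contribution $1/\sqrt{d_u}$ per item, with any refunded-and-rerouted weight added on top. Summing over contributing users, the uniform-style part of $w^*_i$ equals $w_i$ and the rerouted part is nonnegative. If $i$ is not truncated, this gives $w^*_i \ge w_i$, establishing (a); note in particular that the pre-truncation weight $\sum_u 1/d_u \le \sum_u 1/\sqrt{d_u} = w_i$, so any item with $w_i < \tau$ is automatically not truncated. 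If $i$ is truncated, then by construction its weight is clipped exactly to $\tau$, and the subsequent rerouting and top-up stages only add nonnegative amounts, giving $w^*_i \ge \tau$ and establishing (b).

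From $(\star)$ the two bullets follow quickly. Conditioning on the data, $\Pr(i \in U) = \Pr(Z_i \ge T - w_i)$ and $\Pr(i \in U^*) = \Pr(Z_i \ge T - w^*_i)$. When $w_i < \tau$, the identity $\Phi(\beta) = \Pr(Z_i \ge -\beta\sigma)$ gives $\Pr(i \in U) < \Phi(\beta)$, while (a) yields $\Pr(i \in U^*) \ge \Pr(i \in U)$---the first bullet. When $w_i \ge \tau$, we have $\Pr(i \in U) \ge \Phi(\beta)$ automatically, and (b) gives $\Pr(i \in U^*) \ge \Pr(Z_i \ge -\beta\sigma) = \Phi(\beta)$---the second bullet. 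The main obstacle is the bookkeeping inside $(\star)$: although the decomposition above makes the claim plausible, one must carefully check that the exact refund formula, the chosen rerouting fraction, and the $[\dmin,\dmax]$ cutoffs never cause weight to be subtracted from any item, and that the pipeline behaves correctly when many of a single user's items are simultaneously truncated. I would handle this by writing each item's final weight as an explicit per-user sum and verifying monotonicity contribution by contribution.
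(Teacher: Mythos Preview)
Your proposal is correct and follows essentially the same route as the paper: case on whether the item's initial adaptive weight exceeds $\tau$ (equivalently, whether it is truncated), show that non-truncated items satisfy $w^*_i \ge w_i$ via the decomposition into the uniform $1/\sqrt{d_u}$ contribution plus nonnegative rerouted weight, show that truncated items end at weight at least $\tau$, and then translate both cases into Gaussian tail probabilities. Your explicit observation that $\sum_u 1/d_u \le \sum_u 1/\sqrt{d_u} = w_i$ (hence $w_i < \tau$ forces non-truncation) is exactly the link between the case split and the theorem's bullets that the paper leaves implicit, so if anything your write-up is slightly cleaner on that point.
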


Compared to the basic algorithm, \ouralgo{} has a higher probability of outputting any item that does not reach the adaptive threshold in its initial stage as it reroutes excess weight to these items.
The theorem shows that \ouralgo{} \emph{stochastically dominates} the basic algorithm on these items.
For the remaining items, they already have an overwhelming probability of being output as their final weight before adding noise is at least several standard deviations above the threshold (this is quantitatively controlled by the parameter $\beta$).
In \cref{sec:utility}, we also describe a simple, concrete family of instances where \ouralgo{} significantly improves upon the baselines.

Finally, we conduct experiments on several publicly-available datasets with up to 800 billions of (user, item) pairs (up to three orders of magnitude larger than prior datasets used in sequential algorithms). Our algorithm outperforms scalable baselines and is competitive with the sequential baselines.

\subsection{Related Work}
Our algorithms are in the area of privacy preserving algorithms with differential privacy guarantee which is the de facto standard of privacy (we refer to~\cite{dwork2014book} for an introduction to this area). As we covered the application and prior work on private partition selection in the introduction, we now provide more details on the work most related to our paper.

The differentially private partition selection problem was first studied in~\cite{korolova2009releasing}. They utilized the now-standard approach of subsampling to limit the number of items in each user's set, constructing weights over items, and thresholding noised weights to produce the output. They proposed a version of the basic weighting algorithm which uses the Laplace mechanism rather than the Gaussian mechanism. This algorithm was also used in~\cite{wilson2020dpsql} within the context of a private SQL system.
The problem received renewed study in~\cite{gopi2020dpunion} where the authors propose a generic class of greedy, sequential weighting algorithms which empirically outperform basic weighting (with either the Laplace or Gaussian mechanism). \cite{carvalho2022incorporatingitem} gave an alternative greedy, sequential weighting algorithm which leverages item frequencies in cases where each user has a multiset of items.
\cite{desfontaines2022dppartition} analyzed in depth the optimal strategy when each user has only a single item (all sets have cardinality one). This is the only work that does not utilize the weight and threshold approach, but it is tailored only for this special case.
The work most related to ours is DP-SIPS~\cite{swanberg2023dpsips} which proposes the first algorithm other than basic weighting which is amenable to implementation in a parallel environment. DP-SIPS splits the privacy budget over a small number of rounds, runs the basic algorithm as a black box each round, and iteratively removes the items found in previous rounds for future computations. This simple idea leads to large empirical improvements, giving a scalable algorithm that has competitive performance with sequential algorithms.



\section{Preliminaries}\label{sec:prelim}
\begin{definition}[Differentially-Private Partition Selection]
In the differentially-private partition selection (a.k.a. private set union or key selection) problem, there are $n$ users with each user $u$ having a set $S_u$ of items from an unknown and possibly infinite universe $\Sigma$ of items: the input is of the form $\sets = \{(u, S_u)\}_{u \in[n]}$. The goal is to output a set of items $U$ of maximum cardinality, such that $U$ is a subset of the union of the users' sets $\U=\cup_{u \in [n]} S_u$,  while maintaining user-level differentially privacy.
\end{definition}
As standard in prior work~\cite{korolova2009releasing, gopi2020dpunion, carvalho2022incorporatingitem, swanberg2023dpsips} we consider the central differential privacy model, where the input data is available to a curator that runs the algorithm and wants to ensure differential privacy for the output of the algorithm. We now formally define these notions.

\begin{definition}[Neighboring Datasets]
We say that two input datasets $\sets$ and $\sets'$ are neighboring if one can be obtained by removing a single user's set from the other, i.e., $\sets' = \sets \cup \{(v, S_v)\}$ for some new user $v$.
\end{definition}

\begin{definition}[Differential Privacy \cite{dwork2014book}]
A randomized algorithm $\mathcal{M}$ is $(\eps, \delta)$-differentially private, or $(\eps, \delta)$-DP, if for any two neighboring datasets $\sets$ and $\sets'$ and for any possible subset of outputs $\mathcal{O} \subseteq \{U: U \subseteq \Sigma \}$,
\[
    \Pr(\mathcal{M}(\sets) \in \mathcal{O}) \leq e^\eps \cdot \Pr(\mathcal{M}(\sets') \in \mathcal{O}) + \delta.
\]
\end{definition}

Let $\Phi: \mathbb{R} \to \mathbb{R}$ be the standard Gaussian cumulative density function.

\begin{proposition}[Gaussian Mechanism \cite{balle2018gaussianmechanism}]\label{prop:gaussian}
Let $f: \mathcal{D} \to \mathbb{R}^d$ be a function with $\ell_2$ sensitivity $\Delta_2$. For any $\eps > 0$ and $\delta \in (0,1]$, the mechanism $M(x) = f(x) + Z$ with $Z \sim \mathcal{N}(0, \sigma^2 I)$ is $(\eps, \delta)$-DP if
\begin{equation*}
    \Phi\left(\frac{\Delta_2}{2\sigma} - \frac{\eps \sigma}{\Delta_2}\right) - e^{\eps} \Phi\left(-\frac{\Delta_2}{2\sigma} - \frac{\eps \sigma}{\Delta_2}\right) \leq \delta.
\end{equation*}
\end{proposition}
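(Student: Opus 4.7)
The plan is to reduce the general $d$-dimensional statement to a comparison between two one-dimensional Gaussians whose means differ by exactly $\Delta_2$, and then evaluate the resulting hockey-stick divergence in closed form.

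First, I would invoke the standard reformulation of $(\eps,\delta)$-differential privacy: for neighboring inputs $x,x'$, the mechanism is $(\eps,\delta)$-DP iff $\sup_{E}\bigl(\Pr[M(x)\in E]-e^{\eps}\Pr[M(x')\in E]\bigr)\le\delta$ (and symmetrically with $x,x'$ swapped), where the supremum ranges over measurable output sets. A Neyman--Pearson-style argument shows this supremum is attained on the privacy-loss superlevel set $E^{\star}=\{y:\log(p(y)/q(y))>\eps\}$, where $p,q$ are the densities of $M(x),M(x')$. Thus it suffices to bound $\Pr_P[L>\eps]-e^{\eps}\Pr_Q[L>\eps]$ with $P=\mathcal{N}(f(x),\sigma^2 I)$, $Q=\mathcal{N}(f(x'),\sigma^2 I)$, and $L(y)=\log(p(y)/q(y))$.

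Second, I would reduce to one dimension. Letting $v=f(x)-f(x')$ with $\|v\|_2\le\Delta_2$, the log-likelihood ratio equals $L(y)=\langle v,y\rangle/\sigma^2 - \|v\|^2/(2\sigma^2)$, so $L$ depends on $y$ only through its scalar projection onto $v$. Under both $P$ and $Q$ this projection is a one-dimensional Gaussian with variance $\sigma^2$, and its mean shifts by exactly $\|v\|$ between the two distributions. The problem therefore collapses to comparing $\mathcal{N}(\|v\|,\sigma^2)$ with $\mathcal{N}(0,\sigma^2)$, and since the hockey-stick quantity is monotone nondecreasing in $\|v\|$ (larger mean gap makes the laws more distinguishable) for fixed $\eps$, the worst case is $\|v\|=\Delta_2$.

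Third, I would compute the two tail probabilities explicitly. A direct calculation gives that, under $P$, $L\sim\mathcal{N}\!\bigl(\Delta_2^2/(2\sigma^2),\,\Delta_2^2/\sigma^2\bigr)$ and, under $Q$, $L\sim\mathcal{N}\!\bigl(-\Delta_2^2/(2\sigma^2),\,\Delta_2^2/\sigma^2\bigr)$. Standardizing each and using $\Pr[Z>t]=\Phi(-t)$ for $Z\sim\mathcal{N}(0,1)$ yields
\begin{equation*}
\Pr_P[L>\eps]=\Phi\!\left(\tfrac{\Delta_2}{2\sigma}-\tfrac{\eps\sigma}{\Delta_2}\right),\qquad \Pr_Q[L>\eps]=\Phi\!\left(-\tfrac{\Delta_2}{2\sigma}-\tfrac{\eps\sigma}{\Delta_2}\right),
\end{equation*}
so substituting into $\Pr_P[L>\eps]-e^{\eps}\Pr_Q[L>\eps]\le\delta$ recovers the stated condition. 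Swapping the roles of $x$ and $x'$ produces the same expression by the symmetry of the isotropic Gaussian, closing the two-sided DP requirement. The main obstacle I anticipate is the Neyman--Pearson step identifying $E^{\star}$ as the worst-case event; once that is in place, the rotational-invariance reduction to $1$D and the explicit Gaussian tail computation are essentially mechanical.
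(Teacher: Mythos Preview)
The paper does not actually prove this proposition: it is stated as a cited result from \cite{balle2018gaussianmechanism} and used as a black box throughout. So there is no ``paper's own proof'' to compare against. That said, your argument is correct and is essentially the proof given in the cited reference: reformulate $(\eps,\delta)$-DP as a bound on the hockey-stick divergence, use the Neyman--Pearson lemma to identify the worst-case event as the superlevel set of the privacy-loss random variable, exploit rotational invariance of the isotropic Gaussian to collapse to a one-dimensional mean-shift problem, and then read off the two Gaussian tails directly. Your computations of the mean and variance of $L$ under $P$ and $Q$ check out, and the monotonicity of the divergence in $\|v\|$ (so that $\|v\|=\Delta_2$ is the worst case) is the standard data-processing/coupling fact. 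The only place you should be a little more careful in a formal write-up is the Neyman--Pearson step, exactly as you flagged; the rest is indeed mechanical.
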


\section{Weight and Threshold Meta-Algorithm}\label{sec:meta-algo}
\begin{algorithm}[ht]
\footnotesize
\caption{Meta-algorithm for private partition selection.\\
\selectalgo$(\sets, \eps, \delta, \Delta_0, \weightalgo, h)$
 \newline  {\bf Input: } User sets $\sets=\{(u, S_u)\}_{u\in [n]}$, privacy parameters ($\eps$, $\delta$), degree cap $\Delta_0$, weighting algorithm \weightalgo, upper bound on the novel $\ell_\infty$ sensitivity, function $h: \mathbb{N} \rightarrow \mathbb{R}$
 \newline  {\bf Output: } Subset of the union of user sets $U \subseteq \U = \cup_{u=1}^n S_u$, noisy weight vector $\tilde{w}_{ext}$
}
\begin{algorithmic}[1]
\label{alg:generic}
    \STATE Select $\sigma$ corresponding to the Gaussian Mechanism (\cref{prop:gaussian}) for $(\eps, \delta/2)$-DP with $\Delta_2 = 1$.
    \STATE Set $\rho \gets \max_{t \in [\Delta_0]} h(t) + \sigma \Phi^{-1}\left(\left(1 - \frac{\delta}{2}\right)^{1/t}\right)$ 
    \FORALL{$u \in [n]$} 
        \IF{$|S_u| \geq \Delta_0$} 
            \STATE Randomly subsample $S_u$ to $\Delta_0$ items. \COMMENT{Cap user degrees.}
        \ENDIF
    \ENDFOR
    \STATE $w \gets \weightalgo(\sets)$ \COMMENT{Weights on items in $\U$}
    \STATE $\tilde{w}(i) \gets w(i) + \mathcal{N}(0, \sigma^2 I)$ \COMMENT{Add noise}
    \STATE $U \gets \{i \in \U : \tilde{w}(i) \geq \rho\}$ \COMMENT{Apply threshold.}
    \STATE $\tilde{w}_{ext}(i) \gets 
        \begin{cases}
        \tilde{w}(i) \; &\text{ if } i \in \U \\
        \mathcal{N}(0, \sigma^2 I) \; &\text{ if } i \in \Sigma \setminus \U
        \end{cases}$
        \\ \COMMENT{The $i \in \Sigma \setminus \U$ part is only for privacy analysis (we only ever query this vector on $i \in \U$).}
    \RETURN $U, \tilde{w}_{ext}$ 
\end{algorithmic}
\end{algorithm}

In this section, we formalize the weighting-based meta-algorithm used in prior solutions to the differentially private partition selection problem~\cite{korolova2009releasing, gopi2020dpunion, carvalho2022incorporatingitem, swanberg2023dpsips}. Our algorithm \ouralgo{} also falls within this high-level approach with a novel weighting algorithm that is both adaptive and scalable to massive data.
We alter the presentation of the algorithm from prior work in a subtle, but important, way by having the algorithm release a noisy weight vector $\tilde{w}_{ext}$ in addition to the normal set of items $U$.
This allows us to develop a two-round version of our algorithm \ouralgotworounds{} which queries noisy weights from the first round to give improved performance in the second round, leading to signficant empirical benefit.

The weight and threshold meta-algorithm is given in \cref{alg:generic}. Input is a set of user sets $\sets=\{(u, S_u)\}_{u \in [n]}$, privacy parameters $\eps$ and $\delta$, a maximum degree cap $\Delta_0$, and a weighting algorithm \weightalgo{} (which can itself take some optional input parameters), and a function $h: \mathbb{N} \rightarrow \mathbb{R}$ which describes the sensitivity of \weightalgo{}.

First, each user's set is randomly subsampled so that the size of each resulting set is at most $\Delta_0$ (the necessity of this step will be further explicated).
Then, the \weightalgo{} takes in the cardinality-capped sets and produces a set of weights over all items in the union.
Independent Gaussian noise with standard deviation $\sigma$ is added to each coordinate of the weights, and items with weight above a certain threshold $\rho$ are output.
By construction, this algorithm will only ever output items which belong to the true union,  $U \subseteq \U$, with the size of the output depending on the number of items with noised weight above the threshold.

For the sake of analysis (and not the implementation of the algorithm), we diverge from prior work to return a vector $\tilde{w}_{ext}$ of noisy weights over the entire universe $\Sigma$.
This vector is implicitly used in the proof of privacy for releasing the set of items $U$, but it is never materialized as $|\Sigma|$ is unbounded. 
Within our algorithms, we will ensure that we only ever query entries of this vector which belong to $\U$, so we only ever have to materialize those entries.
Note, however, that it would \emph{not} be private to release $\tilde{w}$ as the output of a final algorithm as the domain of that vector is exactly the true union of the users' sets.

The privacy of this algorithm depends on certain ``sensitivity'' properties of \weightalgo{} as well as our choice of $\sigma$ and $\rho$. Consider any pair of neighboring inputs $\sets$ and $\sets' = \sets \cup \{(v, S_v)\}$, let $\U$ and $\U'$ be the corresponding unions, and let $w$ and $w'$ be the item weights assigned by \weightalgo{} on the two inputs, respectively.

\begin{definition}\label{def:l2-sensitivity}
The \emph{$\ell_2$ sensitivity} of a weighting algorithm is defined as the smallest value $\Delta_2$ such that,
\begin{equation*}
    \Delta_2 \geq \sqrt{\sum_{i \in \U} \left(w'(i) - w(i)\right)^2 + \sum_{i \in \U' \setminus \U} w'(i)^2}.
\end{equation*}
\end{definition}

Given bounded $\ell_2$ sensitivity, choosing the scale of noise $\sigma$ appropriately for the Gaussian mechanism in~\cref{prop:gaussian} ensures that outputting the noised weights on items in $\U$ satisfies $\left(\eps, \frac{\delta}{2} \right)$-DP.
So if we knew $\U$, then the output of the algorithm after thresholding would be private via post-processing.

However, knowledge of the union $\U$ is exactly the problem we want to solve. The challenge is that there may be items in $\U'$ which do not appear in $\U$. Let $T = \U' \setminus \U$ be these ``novel'' items with $t = |T|$. As long as the probability that any of these items are output by the algorithm is at most $\frac{\delta}{2}$, $(\eps, \delta)$-DP will be maintained. Consider a single item $i \in T$ which has zero probability of being output by a weight and threshold algorithm run on $\sets$ but is given some weight $w'(i)$ when \weightalgo{} is run on $\sets'$. The item will be output only if after adding the Gaussian noise with standard deviation $\sigma$, the noised weight exceeds $\rho$. The probability that any item in $T$ is output follows from a union bound. In order to union bound only over finitely many events, we rely on the fact that $t \leq \Delta_0$; this is why the cardinalities must be capped.
This motivates the second important sensitivity measure of \weightalgo{}.

\begin{definition}\label{def:linf-sensitivity}
The \emph{novel $\ell_\infty$ sensitivity} of a weighting algorithm is parameterized by the number $t = |T|$ of items which are unique to the new user,  and is defined as the smallest value $\Delta_\infty(t)$ such that for all possible inputs $\{S_u\}_{u=1}^n$ and new user sets $S_v$,
\begin{equation*}
    \Delta_\infty(t) \geq \max_{i \in T} w'(i).
\end{equation*}
\end{definition}
Then, the calculation of $\rho$ to obtain $(\eps, \delta)$-DP is obtained based on the novel $\ell_\infty$ sensitivity, $\delta$, $\sigma$, and $\Delta_0$. This is formalized in the following theorem whose proof is given in \cref{sec:privacy}.

\begin{theorem}\label{thm:generic-privacy}
Let $\sets, (\eps, \delta), \Delta_0, \weightalgo{}, h$  be inputs to \cref{alg:generic}. 
If \weightalgo{} has bounded $\ell_2$ and novel $\ell_\infty$ sensitivities
\[
    \Delta_2 \leq 1 \; \text{ and } \; \Delta_\infty(t) \leq h(t),
\]
then releasing $U, \tilde{w}_{ext}$ satisfies $(\eps, \delta)$-DP.
\end{theorem}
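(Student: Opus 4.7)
The plan is a standard hybrid argument combining the Gaussian mechanism (for privacy on existing items) with a union-bound good event (for novel items). Fix neighboring inputs $\sets$ and $\sets' = \sets \cup \{(v, S_v)\}$, let $w, w'$ denote the outputs of \weightalgo{} on these two inputs, and set $T = \U' \setminus \U$ with $t = |T|$; the subsampling step guarantees $t \leq |S_v| \leq \Delta_0$. Extend $w, w'$ by zero off their supports, and let $Z \sim \mathcal{N}(0, \sigma^2 I)$ be the noise vector on $\U \cup T$, so $\tilde{w}_{ext}|_{\U \cup T} = w + Z$ on $\sets$ and $\tilde{w}'_{ext}|_{\U \cup T} = w' + Z$ on $\sets'$. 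Entries outside $\U \cup T$ are fresh independent noise with the same distribution on both inputs and so can be ignored for the analysis.

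I would first invoke the Gaussian mechanism. The shift $w' - w$ over $\U \cup T$ has squared norm $\sum_{i \in \U}(w'(i) - w(i))^2 + \sum_{i \in T} w'(i)^2$, which is exactly the quantity bounded by $\Delta_2 \leq 1$ in \cref{def:l2-sensitivity}. Hence by \cref{prop:gaussian}, the chosen $\sigma$ makes the release of $\tilde{w}_{ext}$ satisfy $(\eps, \delta/2)$-DP between $\sets$ and $\sets'$.

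Second, I would define the good event $E = \{\tilde{w}'_{ext}(i) < \rho \text{ for all } i \in T\}$. Under $E$ no novel item crosses the threshold on $\sets'$, so $\{i \in \U' : \tilde{w}'(i) \geq \rho\} = \{i \in \U : \tilde{w}'(i) \geq \rho\}$. Writing $f$ for the post-processing map $v \mapsto (\{i \in \U : v(i) \geq \rho\}, v)$, this shows $\mathcal{M}(\sets') = f(\tilde{w}'_{ext})$ under $E$, while $\mathcal{M}(\sets) = f(\tilde{w}_{ext})$ holds unconditionally. To bound $\Pr[\bar E]$, each $i \in T$ has weight $w'(i) \leq h(t)$, so $\Pr[\tilde{w}'_{ext}(i) < \rho] \geq \Phi((\rho - h(t))/\sigma)$; by independence, $\Pr[E] \geq \Phi((\rho - h(t))/\sigma)^t$, and the construction $\rho \geq h(t) + \sigma\Phi^{-1}((1-\delta/2)^{1/t})$---valid for every $t \in [\Delta_0]$ due to the maximum in the definition of $\rho$---gives $\Pr[\bar E] \leq \delta/2$.

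Finally, for any measurable set of outputs $\mathcal{O}$,
\begin{align*}
\Pr[\mathcal{M}(\sets') \in \mathcal{O}] &\leq \Pr[f(\tilde{w}'_{ext}) \in \mathcal{O}] + \Pr[\bar E] \\
&\leq e^\eps \Pr[f(\tilde{w}_{ext}) \in \mathcal{O}] + \delta/2 + \delta/2 = e^\eps \Pr[\mathcal{M}(\sets) \in \mathcal{O}] + \delta,
\end{align*}
where the middle inequality combines the $(\eps, \delta/2)$-DP guarantee from Step 1 with post-processing by $f$. The main subtlety is the good event: the algorithm run on $\sets'$ naturally thresholds over $\U'$ rather than $\U$, so one has to explicitly verify that conditional on $E$ both inputs factor through the \emph{same} function $f$ of the noisy vector, and that the max over $t \in [\Delta_0]$ in the definition of $\rho$ makes the union-bound tail estimate uniform in the a priori unknown number of novel items.
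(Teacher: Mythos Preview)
Your argument is correct and follows essentially the same approach as the paper: both establish $(\eps,\delta/2)$-DP for $\tilde{w}_{ext}$ via the Gaussian mechanism, bound by $\delta/2$ the probability that any novel item crosses $\rho$ using the novel $\ell_\infty$ bound and the max over $t\in[\Delta_0]$, and then combine---the paper phrases the combination as basic composition of an $(\eps,\delta/2)$-DP release with a $(0,\delta/2)$-DP release, whereas you unroll this into an explicit post-processing/hybrid argument via your map $f$. One small remark: you have written out only the direction $\Pr[\mathcal{M}(\sets')\in\mathcal{O}]\le e^\eps\Pr[\mathcal{M}(\sets)\in\mathcal{O}]+\delta$; the reverse direction goes through by the symmetric choice $g(v)=(\{i\in\U':v(i)\ge\rho\},v)$ together with the good event that the pure-noise coordinates on $T$ in $\tilde{w}_{ext}$ stay below $\rho$, which also has probability at least $1-\delta/2$ since $h(t)\ge 0$.
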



\section{Adaptive Weighting Algorithms}\label{sec:algo}

\begin{algorithm}[h]
\caption{\userweights{}$(S_u, b, \bmin, \bmax)$
\newline{\bf Input: }
 User set $S_u \subseteq \U$,
 biases $b: \U \to [0,1]$,
 minimum bias $\bmin \in [0.5,1]$, maximum bias $\bmax \in [1,\infty]$
 \newline  {\bf Output: } $w_b: S_u \to \mathbb{R}$ weighting of the items
}
\begin{algorithmic}[1]
\label{alg:user-biased}
\STATE Initialize weight vector $w_b$ with zeros
\STATE $S_{biased} = \{i \in S_u: b(i) < 1\}$
\STATE $S_{unbiased} = S_u \setminus S_{biased}$
\STATE $w_b(i) \gets \frac{\max\{\bmin, b(i)\}}{\sqrt{|S_u|}}$ for $i \in S_{biased}$
\COMMENT{Set biased weights, respecting min bias}
\STATE $w_b(i) \gets \min\bc{\frac{\bmax}{\sqrt{|S_u|}}, \sqrt{\frac{1-\sum_{i \in S_{biased}} w_b(i)^2}{|S_{unbiased}|}}}$ for $i \in S_{unbiased}$
\COMMENT{Allocate remaining $\ell_2$ budget, respecting max bias}
\WHILE{$\sum_{i \in S_u} w_b(i)^2 < 1$}
    \STATE $S_{small} \gets \bc{i \in S_u : w_b(i) < \frac{1}{\sqrt{|S_u|}}}$
    \STATE $C \gets \min\bc{
    \frac{\bmax/\sqrt{|S_u|}}{\max_{i \in S_{small}} w_b(i)},
    \sqrt{1 + \frac{1 - \sum_{i \in S_u} w_b(i)^2}{\sum_{i \in S_{small}} w_b(i)^2}}}$
    \STATE $w_b(i) \gets C \cdot w_b(i)$ for $i \in S_{small}$
    \COMMENT{Increase small weights  using remaining $\ell_2$ budget, respecting max bias}
\ENDWHILE
\RETURN $w_b$
\end{algorithmic}
\end{algorithm}

\begin{algorithm*}[ht]
\footnotesize
\caption{\ouralgo$(\sets, \tau, \dmax, b, \bmin, \bmax)$
 \newline{\bf Input: }
 User sets $\sets = \{(u, S_u)\}_{u \in [n]}$,
 adaptive threshold $\tau \geq 0$,
 maximum adaptive degree $\dmax > 1$,
 biases $b: \U \to \mathbb{R}$,
 minimum bias $\bmin \in [0.5,1]$, maximum bias $\bmax \in [1,\infty]$
 \newline  {\bf Output: } $w: \U \to \mathbb{R}$ weighting of the items
}
\begin{algorithmic}[1]
\label{alg:max-deg}
\STATE Initialize weight vectors $w, w_{init}, w_{trunc}, w_{reroute}$ with zeros.
\STATE Set reroute discount factor $\alpha = \bmin - \frac{1}{2\sqrt{\dmax}}$.
\STATE $I_{adapt} = \{u \in [n]: \ceil{\frac{1}{(\bmin)^2}} \leq |S_u| \leq \dmax\}$ \COMMENT{Only users with certain degrees act adaptively.}
\FORALL{$u \in I_{adapt}$}
    \STATE $w_{init}(i) \pluseq 1/|S_u| \qquad \forall i \in S_u$ \COMMENT{Initial $\ell_1$ sensitivity bounded weights.}
\ENDFOR
\STATE $r(i) \gets \min\left\{0, \frac{w_{init}(i) - \tau}{w_{init}(i)}\right\}$ for $i \in \U$ \COMMENT{Fraction of weight that exceeds the threshold.}
\STATE 
\STATE $w_{trunc}(i) \gets \min\{w_{init}(i), \tau\}$ for $i \in \U$  \COMMENT{Truncate weights above threshold.}
\FORALL{$u \in I_{adapt}$}
    \STATE $e_u \gets \left(1/|S_u|\right) \sum_{i \in S_u} r(i)$ \COMMENT{Excess weight returns to each user proportional to their contribution.}
    \STATE $w_{reroute}(i) \pluseq \alpha e_u / d_{max}$ for $i \in S_u$ \COMMENT{Reroute excess to items, discounted by $\alpha/\dmax$.}
\ENDFOR
\STATE $w \gets w_{trunc} + w_{reroute}$ \COMMENT{Total $\ell_1$ bounded adaptive weights.}
\STATE $w_b^u \gets \userweights{}(S_u, b, \bmin, \bmax)  \quad \forall u \in [n]$ \COMMENT{See \cref{alg:user-biased}.}
\FORALL{$u \in I_{adapt}$}
    \STATE $w(i) \pluseq w_b^u(i) - 1/|S_u|$ for $i \in S_u$ \COMMENT{Add $\ell_2$ bounded weight and subtract initial weights.}
\ENDFOR
\FORALL{$u \in [n] \setminus I_{adapt}$}
    \STATE $w(i) \pluseq w_b^u(i)$ for $i \in S_u$ \COMMENT{Add $\ell_2$ bounded weight for non-adaptive items.}
\ENDFOR
\RETURN $w$
\end{algorithmic}
\end{algorithm*}

Our main result is an \emph{adaptive} weighting algorithm \ouralgolong{} (\ouralgo{}) which is amenable to parallel implementations and has the exact same $\ell_2$ and novel $\ell_\infty$ sensitivities as \basicalgo{}. Therefore, within the weight and threshold meta-algorithm, both algorithms utilize the same noise $\sigma$ and threshold $\rho$ to maintain privacy.
Our algorithm improves upon \basicalgo{} by reallocating weight from items far above the threshold to other items.

We present the full algorithm in \cref{alg:max-deg} with the \userweights{} subroutine given in \cref{alg:user-biased}.
For simplicity, we will first describe the ``unbiased''  version of our algorithm where $b, \bmin, \bmax$ are set to ones and $\userweights{}(S_u, b, \bmin, \bmax)$ is a vector of weights over all items with $\nicefrac{1}{\sqrt{|S_u|}}$ for every $i \in S_u$ and zeros in other coordinates.
The algorithm takes two additional parameters: a maximum adaptive degree $\dmax \in (1, \Delta_0]$ and an adaptive threshold $\tau = \rho + \beta \sigma$ for a free parameter $\beta \geq 0$. Users with set cardinalities greater than $\dmax$ are set aside and contribute basic uniform weights to their items at the end of the algorithm. The rest of the users participate in adaptive reweighting. We start from a uniform weighting where each user sends $\nicefrac{1}{|S_u|}$ weight to each of their items. Items have their weights truncated to $\tau$ and any excess weight is sent back to the users proportional to the amount they contributed. Users then reroute a carefully chosen fraction (depending on $\dmax$) of this excess weight across their items. Finally, each user adds $\nicefrac{1}{\sqrt{|S_u|}} - \nicefrac{1}{|S_u|}$ to the weight of each of their items. 

Each of these stages requires linear work in the size of the input, i.e. the sum of the sizes of the users sets.
Furthermore, each stage is straightforward to implement within a parallel framework. As there are a constant number of stages, the algorithm can be implemented with total linear work and constant number of rounds.

\subsection{\ouralgotworounds{}: Biased Weights in Multiple Rounds}

\begin{algorithm}[ht]
\caption{\ouralgotworounds{}$(\sets, (\eps_1, \delta_1), (\eps_2, \delta_2), \Delta_0, \dmax, \beta, C_{lb}, C_{ub}, \bmin, \bmax)$
 \newline  {\bf Input: } User sets $\sets=\{(u, S_u)\}_{u\in [n]}$,
 privacy parameters $(\eps_1, \delta_1)$ and $(\eps_2, \delta_2)$,
 degree cap $\Delta_0$,
 maximum adaptive degree $\dmax$,
 adaptive threshold excess parameter $\beta$,
 lower bound constant $C_{lb}$,
 upper bound constant $C_{ub}$,
 minimum bias $\bmin$,
 maximum bias $\bmax$
 \newline  {\bf Output: } Subset of the union of user sets $\U = \cup_{u=1}^n S_u$
}
\begin{algorithmic}[1]
\label{alg:max-deg-two-round}
\STATE For $u \in [n]$, cap $S_u$ to at most $\Delta_0$ items by random subsampling
\STATE Select $\sigma_r$ corresponding to the Gaussian Mechanism (\cref{prop:gaussian}) for $(\eps_r, \delta_r/2)$-DP with $\Delta_2 = 1$ for $r \in \{1,2\}$.
\STATE \textbf{Round 1}
\STATE Set threshold $\rho_1 = \max_{t \in [\Delta_0]} \frac{1}{\sqrt{t}} + \sigma_1 \Phi^{-1}\left(\left(1 - \frac{\delta}{2}\right)^{1/t}\right)$
\STATE $w_1 \gets \ouralgo(\sets, \rho_1 + \beta \sigma_1, \dmax, \overrightarrow{1}, 1, 1)$ \COMMENT{Compute \ouralgo{} (\cref{alg:max-deg}) weights in the first round.}
\STATE $\tilde{w_1} \gets w_1 + \mathcal{N}(0, \sigma_1^2 I)$ \COMMENT{Add noise}
\STATE $U_1 \gets \{i \in \U : \tilde{w}_1(i) \geq \rho_1\}$ \COMMENT{Apply threshold}
\STATE \textbf{Round 2}
\STATE Set threshold $\rho_2 = \max_{t \in [\Delta_0]} \frac{\bmax}{\sqrt{t}} + \sigma_2 \Phi^{-1}\left(\left(1 - \frac{\delta}{2}\right)^{1/t}\right)$
\STATE $\tilde{w}_{lb} \gets \max\bc{0, \tilde{w}_1 - C_{lb} \cdot \sigma_1}$
\COMMENT{Weight lower bound from Round 1}
\STATE $\tilde{w}_{ub} \gets \tilde{w}_1 + C_{ub} \cdot \sigma_1$
\COMMENT{Weight upper bound from Round 1}
\STATE $U_{low} \gets \bc{i \in \U: \tilde{w}_{ub} < \rho_2}$
\STATE $S_u \gets S_u \setminus \p{U_1 \cup U_{low}}$ for $u \in [n]$
\COMMENT{Remove items found in Round 1 or with a small upper bound on the weight}
\STATE $b \gets \min\bc{1, \frac{\rho_2}{\tilde{w}_{lb}}}$
\COMMENT{Bias weights to not overshoot threshold}
\STATE $w_2 \gets \ouralgo{}(\sets, \rho_2 + \beta \sigma_2, \dmax, b, \bmin, \bmax)$
\COMMENT{Compute \ouralgo{} (\cref{alg:max-deg}) in the second round}
\STATE $\tilde{w}_2 \gets w_2 + \mathcal{N}(0, \sigma_2^2 I)$ \COMMENT{Add noise}
\STATE $U_2 \gets \{i \in \U : \tilde{w}_2(i) \geq \rho_2\}$ \COMMENT{Apply threshold}
\RETURN $U_1 \cup U_2$
\end{algorithmic}
\end{algorithm}

This unbiased version of \ouralgo{} directly improves on the basic algorithm.
We further optimize our algorithm by refining an idea from the prior work of DP-SIPS~\cite{swanberg2023dpsips}.
In that work, the privacy budget is split across multiple rounds with \basicalgo{} used in each round.
In each round, items found in previous rounds are removed from the users' sets, so that in early rounds, easy-to-output (loosely speaking, high frequency) items are output, with more weight being allocated to harder-to-output items in future rounds.
The privacy of this approach follows from post-processing: we can freely use the differentially private output $U$ from early rounds to remove items in later rounds.

We propose \ouralgotworoundslong{} (\ouralgotworounds{}) which as a starting point runs \ouralgo{} in two rounds, splitting the privacy budget as in DP-SIPS. As \ouralgo{} stochastically dominates \basicalgo{}, this provides a drop-in improvement.
Our key insight comes from the modified meta-algorithm we present in \cref{sec:meta-algo} which also outputs the vector of noisy weights $\tilde{w}_{ext}$.
As long as the \weightalgo{} maintains bounded sensitivity, we are free to query the noisy weights from prior rounds when constructing weights in future rounds.

We leverage this by running in two rounds with the full pseudocode given in \cref{alg:max-deg-two-round}.
In the first round, we run the unbiased version of \ouralgo{} described above to produce outputs $U_1 = U$ as well as query access to $\tilde{w}_{ext}$.
We will only ever query items in $\U$, so we maintain $\tilde{w}_1$ which is $\tilde{w}_{ext}$ restricted to $\U$ without ever materializing $\tilde{w}_{ext}$. Importantly though, we never release $\tilde{w}_1$ as a final output.

In the second round, we make three preprocessing steps before running \ouralgo{}. Let $\sigma_1$ be the standard deviation of the noise in the first round and $\rho_2$ be the threshold in the second round. For parameters $C_{lb}, C_{ub} \geq 0$, Let $\tilde{w}_{lb} = \tilde{w}_1 - C_{lb} \cdot \sigma_1$ and $\tilde{w}_{ub} = \tilde{w}_1 + C_{ub} \cdot \sigma_1$ be lower and upper confidence bounds on the true item weights $w_1$ in the first round, respectively.
\begin{enumerate}[label=(\alph*)]
    \item (DP-SIPS) We remove items from users' sets which belong to $U_1$.
    \item (Ours) We remove items $i$ from users' sets which have weight significantly below the threshold where $\tilde{w}_{ub}(i) < \rho_2$. If $\tilde{w}_{ub}(i)$ is very small, we have little chance of outputting the item in the second round and would rather not waste weight on those items. This is particularly relevant for long-tailed distributions we often see in practice where there are many elements which appear in only one or a few users' sets.
    \item (Ours) For items with $\tilde{w}_{lb} \geq \rho_2$, we assign these items biases $b(i) = \nicefrac{\rho_2}{\tilde{w}_{lb}}(i)$. Via \userweights{} (\cref{alg:user-biased}), we (loosely) try to have each user contribute a $b(i)$ fraction of their normal $\nicefrac{1}{\sqrt{|S_u|}}$ weight while increasing the weights on unbiased items.
    As the lower bound on these item weights is very large, we do not need to spend as much of our $\ell_2$ budget on these items.
    For technical reasons, in order to preserve the overall sensitivity of \ouralgo{}, we must enforce minimum and maximum bias parameters $\bmin \in [0.5, 1]$ and $\bmax \in [1, \infty)$.
    The weights returned by \userweights{} are in the interval $\br{\nicefrac{\bmin}{\sqrt{|S_u|}}, \nicefrac{\bmax}{\sqrt{|S_u}}}$ and have an $\ell_2$ norm of 1.
\end{enumerate}




\section{Privacy Analysis}\label{sec:privacy}
\subsection{Meta-Algorithm}
We start by proving the privacy of the meta-algorithm described in \cref{sec:meta-algo}.
\begin{proof}[Proof of \cref{thm:generic-privacy}]\label{proof:generic-privacy}
Let $w_{ext}: \Sigma \rightarrow \mathbb{R}$ be an extension of the weight vector $w$ returned by \weightalgo{} where
\begin{equation*}
w_{ext}(i) = 
\begin{cases}
    w(i) \; &\text{ if } i \in \U \\
    0 \; &\text{ if } i \in \Sigma \setminus \U
\end{cases}.
\end{equation*}
Note that $\tilde{w}_{ext}$ is exactly the result of applying the Gaussian Mechanism to $w_{ext}$.
Furthermore, the $\ell_2$ sensitivity (\cref{def:l2-sensitivity}) of computing $w_{ext}$ is the same as that of $w$ as items outside of $\U'$ do not contribute to the sensitivity as they are $0$ regardless of whether the input is $\sets$ or $\sets'$.
By the choice of $\sigma$ according to \cref{prop:gaussian} with $\Delta_2 \leq 1$, releasing $\tilde{w}_{ext}$ is $\p{\eps, \frac{\delta}{2}}$-DP.

The privacy of releasing $U$ depends on our choice of the threshold $\rho$.
We will first show that the probability that any of $t$ i.i.d.\ draws from a Gaussian random variable $\mathcal{N}(0, \sigma^2)$ exceeds $\sigma \Phi^{-1}\p{\p{1 - \frac{\delta}{2}}^{1/t}}$ is exactly $\frac{\delta}{2}$.
Let $A$ be the bad event, $Y \sim \mathcal{N}(0, \sigma^2)$, and $Z \sim \mathcal{N}(0, 1)$:
\begin{align*}
   \Pr(A) &= 1 - \Pr\p{Y \leq \sigma \Phi^{-1}\p{\p{1 - \frac{\delta}{2}}^{1/t}}}^t \\
   &= 1 - \Pr\p{Z \leq \Phi^{-1}\p{\p{1 - \frac{\delta}{2}}^{1/t}}}^t \\
   &= 1 - \Phi\p{\Phi^{-1}\p{\p{1 - \frac{\delta}{2}}^{1/t}}}^t \\
   &= 1 - \p{1 - \frac{\delta}{2}}^{t/t} \\
   &= \frac{\delta}{2}.
\end{align*}

By the condition that $h(t)$ is an upper bound on $\Delta(t)$, the choice of $\rho$ implies that, no matter how many items are novel (unique to the new user in a neighboring dataset), the probability that any of them belong to $U$ is at most $\frac{\delta}{2}$.
Conditioned on the release of $\tilde{w}_{ext}$, releasing $U$ is $\p{0, \frac{\delta}{2}}$-DP.
By basic composition, the overall release is $(\eps, \delta)$-DP, as required.
\end{proof}


\subsection{Adaptive Algorithms}
We now prove the privacy of our main algorithm by bounding its $\ell_2$ and novel $\ell_\infty$ sensitivities.
\begin{theorem}[Privacy of \ouralgotworounds{}]\label{thm:dp-two-rounds}
\cref{alg:max-deg-two-round} is $(\eps_1 + \eps_2, \delta_1 + \delta_2)$-DP.
\end{theorem}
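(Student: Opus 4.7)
The plan is to decompose \cref{alg:max-deg-two-round} into its two rounds, apply the meta-algorithm privacy theorem \cref{thm:generic-privacy} to each round separately, and combine via basic adaptive composition. The main mathematical content will be encapsulated in two sensitivity lemmas for \ouralgo{} that verify the hypotheses of \cref{thm:generic-privacy}; I would state these lemmas for an arbitrary admissible bias vector so that both rounds are covered by a single argument.

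Round 1 is literally an instance of \cref{alg:generic} with privacy parameters $(\eps_1, \delta_1)$, cap $\Delta_0$, \weightalgo{} equal to unbiased \ouralgo{} (i.e., with $b = \overrightarrow{1}$ and $\bmin = \bmax = 1$), and sensitivity bound $h_1(t) = 1/\sqrt{t}$; the stated $\rho_1$ matches the one prescribed by the meta-algorithm. Granting the sensitivity lemmas, \cref{thm:generic-privacy} immediately yields that the Round 1 release $(U_1, \tilde{w}_1)$ is $(\eps_1, \delta_1)$-DP, where $\tilde{w}_1$ is the restriction of the meta-algorithm's $\tilde{w}_{ext}$ to $\U$. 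For Round 2, the quantities $\tilde{w}_{lb}, \tilde{w}_{ub}, U_{low}, b$ and the restricted user sets $S_u \setminus (U_1 \cup U_{low})$ are all deterministic functions of the Round 1 release, so by post-processing I may condition on them and treat $b$, $\bmin$, $\bmax$ as fixed hyperparameters. The set removal preserves the neighboring relation between $\sets$ and $\sets'$, so Round 2 is again an instance of \cref{alg:generic}, now with $(\eps_2, \delta_2)$, \weightalgo{} equal to biased \ouralgo{}, and $h_2(t) = \bmax/\sqrt{t}$; the stated $\rho_2$ again matches. Applying \cref{thm:generic-privacy} a second time gives that Round 2's release is $(\eps_2, \delta_2)$-DP conditional on Round 1.

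The main technical obstacle is the sensitivity analysis of \ouralgo{} itself. For the novel $\ell_\infty$ sensitivity, the key observation is that for any novel item $i$ (owned only by the new user $v$), only $v$ contributes to $w_{init}(i)$; if $v$ is adaptive then the restriction $|S_v| \geq \lceil 1/\bmin^2 \rceil$ combined with $\rho \geq \bmax$ forces $1/|S_v| \leq \tau$, so $r(i) = 0$ and the $w_{trunc}$ term cancels the $-1/|S_v|$ correction, leaving $w(i) = w_b^v(i) + \alpha e_v/\dmax$, which the calibration $\alpha = \bmin - 1/(2\sqrt{\dmax})$ keeps below $\bmax/\sqrt{t}$ using $|S_v| \geq t$. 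For the $\ell_2$ sensitivity the argument is considerably more delicate, because adding $v$ cascades through $w_{init} \to r \to e_u \to w_{reroute}$ for every adaptive user whose set intersects $S_v$. I would decompose $w' - w$ into (i) the direct \userweights{} contribution of $v$, which has $\ell_2$ norm exactly $1$ by construction of \cref{alg:user-biased}, plus (ii) the cascade of truncation and reroute changes on existing users, plus (iii) the $w_{trunc}$ and $1/|S_u|$ adjustment terms. The plan is to show that the $\ell_2$ decreases from additional truncation on existing items exactly offset the rerouted increases on their co-users' items, yielding a zero net squared-norm contribution from (ii) and (iii). This cancellation is the precise reason for the careful choices of $\alpha$, the min/max bias constraints, and the degree bounds, and I expect it to be the principal difficulty.

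With both sensitivity lemmas established, \cref{thm:generic-privacy} gives $(\eps_r, \delta_r)$-DP for Round $r \in \{1, 2\}$. Since Round 2 is run adaptively after Round 1, basic adaptive composition of DP mechanisms yields that the joint release $(U_1, \tilde{w}_1, U_2, \tilde{w}_2)$ is $(\eps_1 + \eps_2, \delta_1 + \delta_2)$-DP, and the final output $U_1 \cup U_2$ is a post-processing of this joint release, completing the proof.
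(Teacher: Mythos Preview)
Your high-level architecture is exactly the paper's: recognize each round as an instance of the meta-algorithm \cref{alg:generic}, invoke \cref{thm:generic-privacy} twice with $h_r(t)=\bmax/\sqrt{t}$ (with $\bmax=1$ in Round~1), and compose. Your treatment of Round~2's dependence on Round~1 via post-processing of $(U_1,\tilde{w}_{ext})$ also matches. The $\ell_\infty$ sketch is essentially the paper's \cref{lem:linf-sensitivity-biased}.

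The genuine gap is in your $\ell_2$ plan. You anticipate that ``the $\ell_2$ decreases from additional truncation on existing items exactly offset the rerouted increases on their co-users' items, yielding a zero net squared-norm contribution from (ii) and (iii).'' This cancellation does not hold: the reroute perturbation $\Delta_{reroute}$ is supported on $\bigcup_{u:\,S_u\cap S_v\neq\emptyset} S_u$, which generically strictly contains $S_v$, so there is no way for the $S_v$-supported truncation adjustments to annihilate it, either as a vector or in squared norm. Concretely, if $v$ shares a single over-threshold item with some user $u$, then $\Delta_{reroute}$ is strictly positive on every item of $S_u$, including items outside $S_v$ where your parts (i) and (iii) are identically zero.

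What the paper actually does in \cref{lem:l2-sensitivity-biased-weights} is a quantitative tradeoff, not a cancellation. One writes $\Delta=\Delta_{user}+\Delta_{reroute}$ with $\Delta_{user}(i)=\Delta_{trunc}(i)+w_b^v(i)-1/d_v$ on $S_v$, introduces the total truncated mass $\gamma=\|\Delta_{init}-\Delta_{trunc}\|_1$, and proves two separate bounds:
\[
\|\Delta_{reroute}\|_2 \;\le\; \frac{\alpha\gamma}{\sqrt{\dmax}}
\qquad\text{and}\qquad
\|\Delta_{user}\|_2 \;\le\; \sqrt{\,1-\frac{2\bmin\gamma}{\sqrt{d_v}}+\frac{\gamma}{d_v}\,}.
\]
The first uses that the total rerouted excess equals $\gamma$ (an $\ell_1$ bound) together with the per-coordinate cap $\alpha\gamma/\dmax$ and H\"older; the second uses that truncation strictly shrinks $\Delta_{user}$ below the norm-$1$ vector $w_b^v$ by an amount governed by $\gamma$ and the $\bmin$ lower bound from \cref{lem:l2-bound-user-weights}. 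One then adds via the triangle inequality and checks that the specific choice $\alpha=\bmin-\tfrac{1}{2\sqrt{\dmax}}$ makes the sum at most $1$ for all $\gamma\in[0,1]$ (this is where \cref{lem:gen-limit} enters). So the role of $\alpha$, $\bmin$, and $\dmax$ is not to engineer an exact offset but to ensure the slack created in $\|\Delta_{user}\|_2$ dominates the growth of $\|\Delta_{reroute}\|_2$; your plan should be revised along these lines.
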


The rest of this section will be devoted to proving this theorem. We first state as a corollary that \ouralgo{} run in a single round without biases is also private.

\begin{corollary}[Privacy of \ouralgo{}]\label{thm:dp-one-round}
Releasing the output $U$ from \cref{alg:generic} run with unbiased \ouralgo{} (\cref{alg:max-deg}) as the weighting algorithm and with $h(t) = \frac{1}{\sqrt{t}}$ is $(\eps, \delta)$-DP.
\end{corollary}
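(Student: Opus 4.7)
The plan is to obtain \cref{thm:dp-one-round} as an immediate specialization of the (yet-to-be-proved) \cref{thm:dp-two-rounds} together with the meta-theorem \cref{thm:generic-privacy}. Concretely, I would observe that setting $b \equiv 1$ and $\bmin = \bmax = 1$ in \ouralgo{} (\cref{alg:max-deg}) makes the call to \userweights{} (\cref{alg:user-biased}) return the uniform vector $w_b^u(i) = 1/\sqrt{|S_u|}$ for every $i \in S_u$, so unbiased \ouralgo{} coincides with the weighting algorithm used in Round~1 of \ouralgotworounds{} (\cref{alg:max-deg-two-round}). The corollary will therefore follow as soon as the sensitivity bounds $\Delta_2 \leq 1$ and $\Delta_\infty(t) \leq 1/\sqrt{t}$ underlying Round~1 have been established inside the proof of \cref{thm:dp-two-rounds}; plugging $h(t) = 1/\sqrt{t}$ into \cref{thm:generic-privacy} then yields $(\eps,\delta)$-DP for the single-round unbiased instance.

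The novel $\ell_\infty$ bound is the easier of the two to extract. I would fix a neighboring pair $\sets' = \sets \cup \{(v,S_v)\}$, let $T$ be the set of items unique to $v$ with $t = |T|$, and trace $w'(i)$ for $i \in T$ through \cref{alg:max-deg}: if $v$ is non-adaptive then $w'(i) = 1/\sqrt{|S_v|}$, while if $v$ is adaptive then each novel item satisfies $w_{init}(i) = 1/|S_v| < \tau$ and thus escapes truncation ($r(i) = 0$), receives the reroute $\alpha e_v/\dmax$, and picks up the final top-up $1/\sqrt{|S_v|} - 1/|S_v|$, for a total of $1/\sqrt{|S_v|} + \alpha e_v/\dmax$. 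The desired bound $w'(i) \leq 1/\sqrt{t}$ then follows from $|S_v| \geq t$, $|S_v| \leq \dmax$, the observation that only the non-novel items of $v$ can contribute to $e_v$ (so $e_v \leq (|S_v| - t)/|S_v|$), and the calibrated choice $\alpha = 1 - 1/(2\sqrt{\dmax})$.

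For the $\ell_2$ bound I would decompose the change vector $w' - w$ according to the three additive stages of \cref{alg:max-deg}: the initial $1/|S_u|$ allocations, the truncation-plus-reroute correction, and the final top-up $1/\sqrt{|S_u|} - 1/|S_u|$. The new user $v$ contributes to the first and third stages only through its own items, and these two stages together recover the per-item weight $1/\sqrt{|S_v|}$ used by basic weighting, yielding a self-contained $\ell_2$ contribution of at most $1$. The subtle piece is the middle stage: adding $v$ can push shared items past the truncation threshold $\tau$, perturbing $r(i)$ and thus the excesses $e_u$ of every other adaptive user overlapping with $v$, which in turn shifts $w_{reroute}$ on all of their items. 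The plan is to control this cascade using that truncation is $1$-Lipschitz, that each $e_u$ is averaged over $|S_u|$ items, and that the reroute is scaled by $1/\dmax$, so that the aggregate indirect $\ell_2$ change is absorbed without exceeding the budget of $1$.

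The main obstacle is precisely this stage-two cascade: unlike basic weighting, where a new user only perturbs its own items, truncation and rerouting in \ouralgo{} create non-local dependencies across users sharing items with $v$. The parameter choices $\alpha = \bmin - 1/(2\sqrt{\dmax})$, the $1/\dmax$ reroute divisor, and the minimum adaptive degree $\lceil 1/\bmin^2 \rceil$ are engineered so that the cascade telescopes into a bounded $\ell_2$ change; verifying this telescoping quantitatively is the technical heart of \cref{thm:dp-two-rounds}, and once done in full biased generality there, the unbiased corollary drops out by substituting $\bmin = \bmax = 1$.
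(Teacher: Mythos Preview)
Your proposal is correct and follows essentially the same route as the paper: the corollary is derived as the single-round specialization of \cref{thm:dp-two-rounds}, with the unbiased instance ($b\equiv 1$, $\bmin=\bmax=1$) coinciding with Round~1 of \ouralgotworounds{}. The paper's own proof is even terser---it simply sets $(\eps_2,\delta_2)=(0,0)$ in \cref{thm:dp-two-rounds}---whereas you additionally sketch the $\ell_\infty$ and $\ell_2$ sensitivity arguments that in the paper are isolated as separate lemmas (\cref{lem:linf-sensitivity-biased} and \cref{lem:l2-sensitivity-biased-weights}) feeding into that theorem; but the logical structure is the same.
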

\begin{proof}
This follows directly from \cref{thm:dp-two-rounds} by setting $(\epsilon_1, \delta_1) = (\epsilon, \delta)$ and $(\epsilon_2, \delta_2) = (0,0)$.
\end{proof}

To prove privacy, consider a weight vector $w$ returned by \cref{alg:max-deg} for an input $\sets = \{(u, S_u)\}_{u=1}^n, \tau, \dmax, b, \bmin, \bmax$ and the output $w'$ for an input $\sets' = \sets \cup \{(v, S_v)\}, \tau, \dmax, b', \bmin, \bmax$ which includes a new user $v$ not in the original input.
Let $d_v$ be the degree of the new user.
Let $T = S_v \setminus \cup_{u=1}^n S_u$ be the subset of items which appear only in $S_v$ and not in any of the original user sets, and let $t = |T|$. The vectors of biases $b$ and $b'$ are defined on the set of items $\cup_{u=1}^n S_u$ and 
$\left(\cup_{u=1}^n S_u \right) \cup T$ respectively. 
We remind the notation used in \cref{alg:max-deg} for vector $w^v_b$ to denote the biased weight vector the new user $v$ computes by calling \cref{alg:user-biased}, i.e. \userweights.



\begin{lemma}[Novel $\ell_\infty$ sensitivity with biases]\label{lem:linf-sensitivity-biased}
Assume that the adaptive threshold is $\tau \geq 1$, and the maximum adaptive degree is $\dmax \geq 4$. Then, \cref{alg:max-deg} has novel $\ell_\infty$ sensitivity bounded by
\[
    \Delta_\infty(t) \leq \frac{\bmax}{\sqrt{t}}.
\]
where $t = |T|$ is the cardinality of $T$, the set of novel items.
\end{lemma}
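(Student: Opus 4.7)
The plan is to split on whether the new user $v$ participates adaptively (i.e., whether $v \in I_{adapt}$). In the non-adaptive case, the only contribution to $w'(i)$ for a novel item $i \in T$ comes from the biased-weight step, giving $w'(i) = w_b^v(i)$. A direct inspection of \cref{alg:user-biased} shows every coordinate of $w_b^v$ is bounded by $\bmax/\sqrt{|S_v|} = \bmax/\sqrt{d_v}$: for biased items this follows from $\bmax \geq 1 \geq \max\{\bmin, b'(i)\}$, for unbiased items the initialization explicitly clips to $\bmax/\sqrt{|S_v|}$, and the while-loop scaling factor $C$ is explicitly capped to preserve that cap. Since $T \subseteq S_v$ gives $t \leq d_v$, the target bound $w'(i) \leq \bmax/\sqrt{t}$ follows.

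For the main case $v \in I_{adapt}$, I trace the contributions to $w'(i)$ through \cref{alg:max-deg}. Only user $v$ owns $i \in T$, so $w_{init}(i) = 1/d_v \leq 1 \leq \tau$; thus $r(i) = 0$ and $w_{trunc}(i) = 1/d_v$, and this is exactly cancelled by the $-1/|S_v|$ correction in the biased-weight update, leaving
\[
    w'(i) \;=\; \frac{\alpha\, e_v}{\dmax} + w_b^v(i).
\]
Two estimates feed into the final step: (a) $w_b^v(i) \leq \bmax/\sqrt{d_v}$ as above; and (b) $e_v \leq (d_v - t)/d_v$, since the $t$ items in $T$ contribute $r = 0$ to $e_v$ while each of the remaining $d_v - t$ items in $S_v$ contributes $r(i) < 1$.

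Setting $y := \sqrt{t/d_v} \in (0, 1]$, the desired bound $\sqrt{t}\, w'(i) \leq \bmax$ becomes, after substituting (a), (b) and using $d_v \leq \dmax$,
\[
    \frac{\alpha\, y(1 - y^2)}{\sqrt{\dmax}} + \bmax\, y \;\leq\; \bmax.
\]
The boundary $y = 1$ (equivalently $t = d_v$) is trivial since then $e_v = 0$. For $y < 1$, factoring $1 - y^2 = (1-y)(1+y)$ and dividing by $1 - y > 0$ reduces the claim to $\alpha\, y(1+y)/\sqrt{\dmax} \leq \bmax$. Using $y(1+y) \leq 2$ on $[0,1]$, the left-hand side is at most $2\alpha/\sqrt{\dmax} = 2\bmin/\sqrt{\dmax} - 1/\dmax$ by the definition $\alpha = \bmin - 1/(2\sqrt{\dmax})$; the hypotheses $\dmax \geq 4$ and $\bmin \leq 1 \leq \bmax$ then bound this by $1 \leq \bmax$.

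The main obstacle will be calibrating the algebra. The specific choice $\alpha = \bmin - 1/(2\sqrt{\dmax})$ is exactly what makes the final inequality land at $\bmax$ under $\dmax \geq 4$---any larger reroute discount would break the bound, and the $(1-y)/(1-y)$ cancellation is what turns the naively delicate regime $t \approx d_v$ (where both $e_v$ and $\sqrt{d_v} - \sqrt{t}$ are small) into a clean constant inequality. A secondary subtlety is confirming the per-coordinate bound $w_b^v(i) \leq \bmax/\sqrt{d_v}$ uniformly over adversarial input biases $b'$ on $T$, but this is made explicit by the clipping in \cref{alg:user-biased}.
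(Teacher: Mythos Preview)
Your proof is correct and follows the same decomposition as the paper: the non-adaptive case is handled by the per-coordinate cap $w_b^v(i) \leq \bmax/\sqrt{d_v}$, and in the adaptive case both you and the paper arrive at the same intermediate bound
\[
    w'(i) \;\leq\; \frac{\alpha}{\dmax}\Bigl(1 - \frac{t}{d_v}\Bigr) + \frac{\bmax}{\sqrt{d_v}}.
\]
The only substantive difference is how the final inequality $w'(i) \leq \bmax/\sqrt{t}$ is extracted. The paper treats the right-hand side as a function of $d_v \in [t,\dmax]$, computes the partial derivative, and shows it is non-positive (using $\dmax \geq 4$), so the maximum occurs at $d_v = t$ where the reroute term vanishes. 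Your substitution $y = \sqrt{t/d_v}$ and the factoring $1 - y^2 = (1-y)(1+y)$ replace that calculus step with a purely algebraic one: after cancelling $1-y$, the claim reduces to the clean constant bound $2\alpha/\sqrt{\dmax} \leq \bmax$, which is exactly where $\alpha = \bmin - 1/(2\sqrt{\dmax})$ and $\dmax \geq 4$ are used. Your route is slightly more elementary and makes the role of the discount factor $\alpha$ more transparent; the paper's derivative argument is more mechanical but equally valid.
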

\begin{proof}
Unpacking \cref{def:linf-sensitivity}, it suffices to show that
\[
    \max_{i \in T} w'(i) \leq \frac{\bmax}{\sqrt{t}}.
\]
Note that this is trivially true if $d_v = |S_v| > \dmax$ or $d_v < \ceil{\frac{1}{(\bmin)^2}}$ since $v$ does not participate in adaptivity due to its too low or high degree. We will proceed by assuming this is not the case.

Consider the final weight of an item $i$ in the set of novel items $T \subseteq S_v$:
\begin{equation*}
    w'(i) = w'_{trunc}(i) + w'_{reroute}(i) + w^v_b(i) - w'_{init}(i). 
\end{equation*}

Note that for all novel items $i \in T$, $w'_{init}(i) = \frac{1}{d_v} \leq \tau$ as $v$ is the sole contributor to the weight of item $i$.
Therefore, no weight is truncated or rerouted from these items: $w'_{trunc}(i) = w'_{init}(i) =  \frac{1}{d_v}$.
Expanding the definition of rerouted weight,
\begin{align*}
    w'_{reroute}(i) &= \frac{\alpha e'_v}{\dmax} \\
    &= \frac{\alpha}{\dmax d_v} \sum_{j \in S_v \setminus T} r'(j) \\
    &= \frac{\alpha}{\dmax d_v} \sum_{j \in S_v \setminus T} \frac{w'_{init}(j) - w'_{trunc}(j)}{w'_{init}(j)} \\
    &\leq \frac{\alpha}{\dmax d_v} \p{d_v - t}\\
    &= \frac{\alpha}{\dmax}\p{1 - \frac{t}{d_v}}.
\end{align*}
Finally, note that $w_b^v(i) \leq \frac{\bmax}{\sqrt{d_v}}$ by construction (this is the meaning of $\bmax$).

We can bound $w'(i)$ as
\begin{align}
    w'(i) &= w'_{trunc}(i) + w'_{reroute}(i) + w^v_b(i) - w'_{init}(i) \notag \\
    &\leq \frac{1}{d_v} + \frac{\alpha}{\dmax}\p{1 - \frac{t}{d_v}} + \frac{\bmax}{\sqrt{d_v}} - \frac{1}{d_v} \notag \\
    &= \frac{\alpha}{\dmax}\p{1 - \frac{t}{d_v}} + \frac{\bmax}{\sqrt{d_v}}. \label{eq:linf-upper-bound}
\end{align}
In the rest of the proof, we will show that the upper bound \cref{eq:linf-upper-bound} is maximized when $d_v = t$, i.e., when the first term is zero.
Recall that $t \leq d_v \leq \dmax$.
Consider the partial derivative with respect to $d_v$:
\begin{equation*}
    \frac{\partial}{\partial d_v} \p{\frac{\alpha}{\dmax}\p{1 - \frac{t}{d_v}} + \frac{\bmax}{\sqrt{d_v}}}
    = \frac{\alpha}{\dmax} \frac{t}{d_v^2} - \frac{\bmax}{2d_v^{3/2}}.
\end{equation*}
Consider the condition of the derivative being non-positive:
\begin{align*}
    &0 \geq \frac{\alpha}{\dmax} \frac{t}{d_v^2} - \frac{\bmax}{2d_v^{3/2}} \\
    \iff &\bmax \geq \frac{2\alpha t}{\dmax \sqrt{d_v}} \\
    \Longleftarrow\; &\bmax \geq \frac{2\alpha\sqrt{t}}{\dmax} \\
    \Longleftarrow\; &\bmax \geq \frac{2\alpha}{\sqrt{\dmax}}.
\end{align*}
The final condition holds as $\alpha \leq 1$ and by the assumption that $\dmax \geq 4$. We note that $\bmax$ is always set to be at least $1$.
As the derivative is non-positive, the right side of \cref{eq:linf-upper-bound} is maximized when $d_v$ is minimized at $d_v=t$.
Then, $\Delta_\infty(t) \leq \frac{\bmax}{\sqrt{t}}$, as required.
\end{proof}

Following we state some properties of the biased weights $w_b$ which will be helpful in the proof.
\begin{lemma}\label{lem:l2-bound-user-weights}
Let $S_u, b, \bmin, \bmax$ be valid inputs to \cref{alg:user-biased}, and let $w_b$ be the weight vector returned by the algorithm.
Let $d = |S_u|$.
Then, the following hold:
\begin{itemize}
    \item $w_b(i) = 0$ for all $i \in \U \setminus S_u$
    \item $\frac{\bmin}{\sqrt{d}} \leq w_b(i) \leq \frac{\bmax}{\sqrt{d}}$ for all $i \in S_u$
    \item $\|w_b\|_2 \leq 1$
\end{itemize}
\end{lemma}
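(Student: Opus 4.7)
Fix a user $u$ with degree $d = |S_u|$. Claim (i), that $w_b(i) = 0$ for $i \notin S_u$, is immediate by inspection: every line of \cref{alg:user-biased} that writes to $w_b$ does so only at indices in $S_{biased}\cup S_{unbiased}=S_u$. For (ii) and (iii), the strategy is to prove that both hold at the end of the initial (pre-loop) assignment and are preserved by each iteration of the \textbf{while} loop; the claims then follow regardless of how many iterations occur.

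For the initial assignment, I would verify (ii) by case: for $i\in S_{biased}$ we have $b(i)<1$ and $\bmin\le 1$, so $\bmin\le\max\{\bmin,b(i)\}\le 1\le\bmax$, giving $\bmin/\sqrt d \le w_b(i)\le 1/\sqrt d \le \bmax/\sqrt d$. For $i\in S_{unbiased}$ the upper bound is built into the $\min$. For the lower bound on unbiased items, use the just-proved bound $w_b(j)\le 1/\sqrt d$ on biased entries to obtain $\sum_{j\in S_{biased}} w_b(j)^2 \le |S_{biased}|/d$, so $(1-\sum_{j\in S_{biased}} w_b(j)^2)/|S_{unbiased}| \ge 1/d$, hence the square-root term is at least $1/\sqrt d\ge\bmin/\sqrt d$. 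For (iii) after initialization, I would split on which branch of the $\min$ is active for $S_{unbiased}$: if the $\sqrt{\cdot}$ branch wins, then $\sum_{i\in S_{unbiased}} w_b(i)^2 = 1 - \sum_{i\in S_{biased}} w_b(i)^2$, so $\|w_b\|_2^2 = 1$; if the $\bmax/\sqrt d$ branch wins, it is the smaller of the two, so $\|w_b\|_2^2$ is only smaller. A byproduct worth recording is that after initialization every unbiased weight is already $\ge 1/\sqrt d$, so $S_{small}\subseteq S_{biased}$ when the loop is entered.

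For the loop, the key observation is that both terms in the $\min$ defining $C$ are at least $1$: the first because every $i\in S_{small}$ satisfies $w_b(i)<1/\sqrt d$ while $\bmax\ge 1$; the second because $\sum_{S_u}w_b(i)^2<1$ inside the loop makes the argument of the square root exceed $1$. Therefore $C\ge 1$, and scaling by $C$ preserves the lower bound in (ii). The first term in the $\min$ is exactly the largest $C$ for which $Cw_b(i)\le\bmax/\sqrt d$ for every $i\in S_{small}$, preserving the upper bound in (ii). The second term is exactly the $C$ that would make $\|w_b\|_2^2 = 1$ after the update: algebraically, $C^2=1+(1-\sum_{S_u}w_b^2)/\sum_{S_{small}}w_b^2$ rearranges to $\sum_{S_u\setminus S_{small}}w_b^2 + C^2\sum_{S_{small}}w_b^2 = 1$; since the chosen $C$ is the minimum of the two, $\|w_b\|_2^2\le 1$ is preserved. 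Finally, termination follows because each iteration either achieves $\|w_b\|_2^2=1$ (ending the loop) or is limited by the first term of the $\min$, in which case the maximizer of $S_{small}$ rises to $\bmax/\sqrt d\ge 1/\sqrt d$ and leaves $S_{small}$, so $|S_{small}|$ strictly decreases.

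The main obstacle is bookkeeping rather than depth: I need to keep the two branches of the $\min$ defining $C$ straight and make sure the invariant ``$\|w_b\|_2^2\le 1$ and $\bmin/\sqrt d\le w_b(i)\le\bmax/\sqrt d$ for all $i\in S_u$'' is stated once and then verified at initialization and inductively through the loop. The subtle point worth highlighting is that the lower bound in (ii) is not obviously preserved during the initial step for unbiased items and relies on the fact that each biased weight is at most $1/\sqrt d$, which in turn uses $\bmin\le 1$; this is the cleanest place to see why the assumption $\bmin\in[0.5,1]$ enters the proof.
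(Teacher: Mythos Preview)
Your proposal is correct and follows essentially the same approach as the paper: verify the bounds after the initial assignment by casing on biased versus unbiased items (using $\sum_{j\in S_{biased}} w_b(j)^2\le |S_{biased}|/d$ for the unbiased lower bound), then argue that $C\ge 1$ preserves lower bounds while the two branches of the $\min$ respectively cap weights at $\bmax/\sqrt d$ and enforce $\|w_b\|_2^2\le 1$. Your explicit termination argument (either $\|w_b\|_2^2=1$ or $|S_{small}|$ strictly decreases) is a nice addition that the paper leaves implicit; one small quibble is that only $\bmin\le 1$ is used in this lemma, not the full range $[0.5,1]$.
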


\begin{proof}
The first claim holds as the weight vector is initialized with zeros and only indices $i \in S_u$ are updated by the algorithm.

To simplify the notation, we define $d = |S_u| = |S_{biased}| + |S_{unbiased}|$.
As $b(i) \leq 1$ for all $i \in \U$, the initial weights given to items in $S_{biased}$ are between $\frac{\bmin}{\sqrt{d}}$ and $\frac{1}{\sqrt{d}}$. We also know that $\frac{1}{\sqrt{d}} \leq \frac{\bmax}{\sqrt{d}}$.
The sum of squares of these weights will thus be between $\frac{\bmin^2 |S_{biased}|}{d}$ and $\frac{|S_{biased}|}{d}$.
Call this value $k$.

Weights of items in $S_{unbiased}$ are given by
\begin{equation*}
    \min\bc{\frac{\bmax}{\sqrt{d}}, \sqrt{\frac{1 - k}{|S_{unbiased}|}}}
\end{equation*}
We show that the minimum of these two terms is at least $\frac{1}{\sqrt{d}} \geq \frac{\bmin}{\sqrt{d}}$. The first term is at least $\frac{1}{\sqrt{d}}$ since $\bmax \geq 1$. To observe the same for the second term, one should   plugg in the upper bound of $k \leq \frac{|S_{biased}|}{d}$. 
By construction, the weights are upper bounded by $\frac{\bmax}{\sqrt{d}}$.
Furthermore, note that the sum of squares of the entire weight vector at this point is upper bounded by $1$. In particular, it is equal to $1$ for the second term of the minimization:
\begin{equation*}
    k + \sum_{i \in S_{biased}} \p{\frac{1 - k}{|S_{unbiased}|}}
    = k + (1 - k)
    = 1.
\end{equation*}

In the remainder of the algorithm, sum of the weights of items in $S_{small} \subseteq S_{unbiased}$ may increase if the $\ell_2$ norm of the weight vector is strictly less than $1$.
Consider the weights after any such update by a multiplicative factor $C$ defined as
\begin{equation*}
    C = \min\bc{
    \frac{\bmax/\sqrt{d}}{\max_{i \in S_{small}} w_b(i)},
    \sqrt{1 + \frac{1 - \sum_{i \in S_u} w_b(i)^2}{\sum_{i \in S_{small}} w_b(i)^2}}}.
\end{equation*}
Note that $C > 1$ by definition of $S_{small}$ and the stopping criteria of the while loop. Therefore, none of the final weights will be less than $\frac{\bmin}{\sqrt{d}}$.
Consider the first case of the minimization.
Any updated weight $C \cdot w_b(i)$ for $i \in S_{small}$ will be at most $\frac{\bmax}{\sqrt{d}}$ as
\begin{equation*}
    \frac{\bmax/\sqrt{d}}{\max_{j \in S_{small}} w_b(j)} \cdot w_b(i)
    \leq 
    \max_{i^* \in S_{small}} \frac{\bmax/\sqrt{d}}{w_b(i^*)} w_b(i^*)
    = \bmax/\sqrt{d}.
\end{equation*}
Now, consider the second case of the maximization.
Then, the squared $\ell_2$ norm of the weight vector will be
\begin{equation*}
    \sum_{i \in S_u \setminus S_{small}} w_b(i)^2
    + \sum_{i \in S_{small}} \p{1 + \frac{1 - \sum_{j \in S_u} w_b(j)^2}{\sum_{j \in S_{small}} w_b(j)^2}} w_b(i)^2
    = \p{\sum_{i \in S_u} w_b(i)^2} + \p{1 - \sum_{j \in S_u} w_b(j)^2}
    = 1.
\end{equation*}
As $C$ is taken to be the minimum of these two values, the final weight vector will satisfy all of the required bounds.
\end{proof}

We will prove a useful fact that \cref{alg:max-deg} is monotone in the sense that weights when run on $\sets'$ will only increase compared to when run on only $\sets$. We apply this proposition in upper bounding the $\ell_2$ sensitivity of \cref{alg:max-deg} in \cref{lem:l2-sensitivity-biased-weights}.
\begin{proposition}[Monotonicity]\label{lem:monotone}
For all $i \in \cup_{u \in \{1, \ldots, n, v\}} S_u$,
\begin{align*}
    w'_{init}(i) &\geq w_{init}(i) \\
    w'_{trunc}(i) &\geq w_{trunc}(i) \\
    w'_{reroute}(i) &\geq w_{reroute}(i).
\end{align*}
\end{proposition}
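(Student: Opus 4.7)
The plan is to argue the three inequalities in order, since each uses the previous one. Throughout, write $I_{adapt}$ and $I'_{adapt}$ for the sets of adaptive users under $\sets$ and $\sets'$. Because adding $v$ does not change the degree of any existing user, $I'_{adapt}$ is either $I_{adapt}$ or $I_{adapt} \cup \{v\}$, in particular $I_{adapt} \subseteq I'_{adapt}$.

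For $w'_{init}$: by construction, $w_{init}(i) = \sum_{u \in I_{adapt},\, i \in S_u} 1/|S_u|$, and similarly for $w'_{init}$. Every term appearing in the sum for $w_{init}(i)$ appears in the sum for $w'_{init}(i)$ (same existing users, with unchanged degrees), and the only possible new term $1/d_v$ (present iff $v \in I'_{adapt}$ and $i \in S_v$) is non-negative. This gives $w'_{init}(i) \geq w_{init}(i)$ for every $i \in \cup_u S_u \cup S_v$ (with equality when $i \notin S_v$ or $v \notin I'_{adapt}$).

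For $w'_{trunc}$: the map $x \mapsto \min\{x, \tau\}$ is monotone non-decreasing on $[0,\infty)$, so applying it to the pointwise inequality for $w_{init}$ already proved yields $w'_{trunc}(i) = \min\{w'_{init}(i), \tau\} \geq \min\{w_{init}(i), \tau\} = w_{trunc}(i)$ immediately.

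For $w'_{reroute}$: the key observation is that the ``excess fraction'' $r(j) = \max\{0, (w_{init}(j) - \tau)/w_{init}(j)\} = \max\{0, 1 - \tau/w_{init}(j)\}$ (treating the algorithm's $\min$ as the intended $\max$, consistent with its stated semantics) is monotone non-decreasing in $w_{init}(j)$. Combined with $w'_{init}(j) \geq w_{init}(j)$, this gives $r'(j) \geq r(j)$ pointwise. Summing over $S_u$ and normalizing preserves the inequality, so $e'_u \geq e_u$ for every $u \in I_{adapt}$. Finally,
\[
    w'_{reroute}(i) = \sum_{u \in I'_{adapt},\, i \in S_u} \frac{\alpha e'_u}{\dmax}
    \geq \sum_{u \in I_{adapt},\, i \in S_u} \frac{\alpha e'_u}{\dmax}
    \geq \sum_{u \in I_{adapt},\, i \in S_u} \frac{\alpha e_u}{\dmax}
    = w_{reroute}(i),
\]
where the first inequality drops the possible extra (non-negative) $v$-term and the second applies $e'_u \geq e_u$ term-by-term.

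The only substantive step is the third claim, and the only subtlety there is the chain of monotonicities: $w_{init}$ monotone $\Rightarrow r$ monotone $\Rightarrow e_u$ monotone $\Rightarrow w_{reroute}$ monotone, together with the observation that $I_{adapt} \subseteq I'_{adapt}$ and $\alpha, \dmax > 0$ so the extra $v$-term (if any) can only help. I do not expect a real obstacle; the proof is a short unwinding of definitions once the monotonicity of $r$ in $w_{init}$ is pointed out.
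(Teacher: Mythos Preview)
Your proof is correct and follows the same approach as the paper's own argument: the paper simply asserts that initial weights can only increase and that truncation, excess fractions, and rerouted weights are all monotone in the initial weights, whereas you spell out each of these monotonicities explicitly (including the observation $I_{adapt} \subseteq I'_{adapt}$ and the typo fix $\min \to \max$ in the definition of $r$). There is nothing to correct; your version is just a more detailed unwinding of the same reasoning.
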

\begin{proof}
Fix any item $i$.
As all increments to the initial $\ell_1$ bounded weights are positive and non-adaptive,
\[
    w'_{init}(i) \geq w_{init}(i).
\]
In fact, the two weights are either equal or differ by a factor of $1/d_v$ depending on whether $i \in S_v$.
The calculations of the fraction of excess weight that exceeds the threshold, the truncated weights, the excess weight returned to each user, and the rerouted weights are all monotonically non-decreasing with the initial weights.
Therefore,
\[
    w'_{trunc}(i) \geq w_{trunc}
\]
and
\[
    w'_{reroute}(i) \geq w_{reroute}.
\]
\end{proof}


\begin{lemma}[$\ell_2$ sensitivity with biased weights]\label{lem:l2-sensitivity-biased-weights}
\cref{alg:max-deg} has $\ell_2$-sensitivity upper bounded by $1$.
\end{lemma}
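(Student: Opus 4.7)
The plan is to case on whether the new user $v$ is adaptive, i.e., whether $d_v = |S_v|$ satisfies $\ceil{1/\bmin^2} \leq d_v \leq \dmax$. The non-adaptive case is immediate: since $v \notin I'_{adapt}$, the quantities $w_{init}$, $w_{trunc}$, $w_{reroute}$ coincide under $\sets$ and $\sets'$, and only $W_b(i) = \sum_u w_b^u(i)$ absorbs the new term $w_b^v$. Thus $w'(i) - w(i) = w_b^v(i)$ on $S_v \cap \U$ and $w'(i) = w_b^v(i)$ on $T$, so $\sqrt{\sum_{i \in \U}(w'(i) - w(i))^2 + \sum_{i \in T} w'(i)^2} = \|w_b^v\|_2 \leq 1$ by \cref{lem:l2-bound-user-weights}.

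For the adaptive case, I would write $w(i) = W_b(i) + R(i) - E(i)$, where $E(i) = \max\{0, w_{init}(i) - \tau\}$ is the truncated excess and $R(i) = w_{reroute}(i)$. Defining the augmented change vector $\delta(i) = w'(i) - w(i)$ for $i \in \U$ and $\delta(i) = w'(i)$ for $i \in T$, and extending $w_b^v$ by zero off $S_v$, a direct coordinatewise check (using $w'_{trunc}(i) = w'_{init}(i) = 1/d_v$ on $T$ and the $\Delta w_{trunc} - \Delta w_{init} = -\Delta E$ identity on $\U \cap S_v$) gives
\[
\delta \;=\; w_b^v + \Delta R - \Delta E,
\]
where $\Delta R, \Delta E \geq 0$ by \cref{lem:monotone} and $\Delta E$ is supported on $\U \cap S_v$. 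Since $\|w_b^v\|_2^2 \leq 1$ by \cref{lem:l2-bound-user-weights}, the target $\|\delta\|_2 \leq 1$ reduces to the cross/self-term inequality
\[
2\,\langle w_b^v,\, \Delta R - \Delta E\rangle + \|\Delta R - \Delta E\|_2^2 \;\leq\; 1 - \|w_b^v\|_2^2.
\]

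To establish this I would combine three ingredients. First, the $\ell_1$-conservation identities $\sum_i E(i) = \sum_{u \in I_{adapt}} e_u$ and $\sum_i R(i) = (\alpha/\dmax)\sum_{u \in I_{adapt}} e_u |S_u|$, which together yield $\sum_i \Delta R(i) \leq \alpha \sum_i \Delta E(i) < \sum_i \Delta E(i)$ since $|S_u| \leq \dmax$ and $\alpha < 1$; this is the precise sense in which the adaptive reroute cannot outweigh the truncated excess. Second, the pointwise two-sided bound $\bmin/\sqrt{d_v} \leq w_b^v(i) \leq \bmax/\sqrt{d_v}$ on $S_v$ from \cref{lem:l2-bound-user-weights}, which lets the negative cross term $-2\langle w_b^v, \Delta E\rangle$ dominate the positive cross term $2\langle w_b^v, \Delta R\rangle$ once the $\ell_1$-conservation is applied. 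Third, a Cauchy--Schwarz bound on the self-term $\|\Delta R - \Delta E\|_2^2$ exploiting that the reroute divides by $\dmax$ rather than $|S_u|$, which caps the $\ell_2$ mass any single user can transfer.

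The main obstacle I anticipate is controlling the indirect reroute on $\U \setminus S_v$, where $\delta(i) = \Delta R(i) = \sum_{u \in I_{adapt}:\, i \in S_u} \alpha \Delta e_u / \dmax$ is a potentially large sum over users $u$ whose excesses grew because of items in $S_u \cap S_v$. The $\ell_2$ bound there cannot simply reuse the $\ell_\infty$ argument of \cref{lem:linf-sensitivity-biased}; instead it must exploit the very same structural ingredients---the factor $\alpha = \bmin - 1/(2\sqrt{\dmax})$ strictly below $\bmin$, the minimum adaptive degree $\ceil{1/\bmin^2}$ guaranteeing $\bmin \geq 1/\sqrt{d_v}$, and the cap $|S_u| \leq \dmax$---to keep the $1/\dmax$ per-coordinate scale of the reroute small enough that its aggregate $\ell_2$ mass fits within the slack $1 - \|w_b^v\|_2^2$ plus the absorbed cross-term. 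In particular the $-1/(2\sqrt{\dmax})$ correction in $\alpha$ is exactly what provides the numerical slack needed to close the inequality tight at $\|\delta\|_2^2 \leq 1$.
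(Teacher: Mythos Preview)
Your decomposition $\delta = w_b^v + \Delta R - \Delta E$ and the non-adaptive case are correct, and you have identified all the right structural ingredients (the excess $\gamma$, the $\ell_1$-conservation $\|\Delta R\|_1 \le \alpha\gamma$, the role of the minimum adaptive degree, and the specific value of $\alpha$). The gap is in how you combine them. Your reduction asks for $2\langle w_b^v, \Delta R - \Delta E\rangle + \|\Delta R - \Delta E\|_2^2 \le 1 - \|w_b^v\|_2^2$, but the right-hand side is typically \emph{zero} (e.g.\ in the unbiased case $w_b^v(i)=1/\sqrt{d_v}$), so you need the left side to be nonpositive. Your ingredient (2)---that the pointwise bounds plus $\ell_1$-conservation make $-2\langle w_b^v,\Delta E\rangle$ dominate $2\langle w_b^v,\Delta R\rangle$---does not follow: bounding $\langle w_b^v,\Delta R\rangle \le (\bmax/\sqrt{d_v})\|\Delta R|_{S_v}\|_1 \le \bmax\alpha\gamma/\sqrt{d_v}$ and $\langle w_b^v,\Delta E\rangle \ge \bmin\gamma/\sqrt{d_v}$ requires $\bmax\alpha \le \bmin$, which fails whenever $\bmax>1$ (take $\bmin=1,\bmax=2,\dmax=100$: then $\alpha=0.95$ and $\bmax\alpha=1.9>1$). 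Even in the unbiased case, after inserting your Cauchy--Schwarz/H\"older bound $\|\Delta R - \Delta E\|_2^2 \le \alpha^2\gamma^2/\dmax + \gamma/d_v$ and taking $d_v=\dmax$, the inequality reduces to $\alpha^2\gamma + 1 \le 2(1-\alpha)\sqrt{\dmax} = 1$, i.e.\ $\alpha^2\gamma \le 0$, which fails for any $\gamma>0$. The worst cases for your cross-term and self-term bounds are not simultaneously achievable, but disentangling that would require a coupled analysis you have not supplied.

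The paper avoids this entirely by grouping differently: it sets $\Delta_{user}=w_b^v-\Delta E$ (supported on $S_v$), bounds $\|\Delta_{user}\|_2 \le \sqrt{1 - 2\bmin\gamma/\sqrt{d_v} + \gamma/d_v}$ using only the lower pointwise bound $w_b^v(i)\ge \bmin/\sqrt{d_v}$ (never $\bmax$), separately bounds $\|\Delta R\|_2 \le \alpha\gamma/\sqrt{\dmax}$ via the H\"older $\ell_1$--$\ell_\infty$ inequality, and then applies the \emph{triangle inequality} $\|\delta\|_2 \le \|\Delta_{user}\|_2 + \|\Delta R\|_2$. The point is that the usable slack is $1-\|\Delta_{user}\|_2$, which is strictly positive whenever $\gamma>0$, not $1-\|w_b^v\|_2^2$. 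The choice $\alpha=\bmin-1/(2\sqrt{\dmax})$ is then shown (via a separate technical lemma analyzing $\inf_{\gamma\in(0,1]}\frac{\sqrt{\dmax}}{\gamma}\bigl(1-\sqrt{1-2\bmin\gamma/\sqrt{\dmax}+\gamma/\dmax}\bigr)$) to be exactly the largest $\alpha$ for which this slack absorbs $\|\Delta R\|_2$ for all $\gamma$. Expanding the square instead of using the triangle inequality replaces the cross-term coefficient $2\|\Delta_{user}\|_2$ by $2$, and that lost factor of $\|\Delta_{user}\|_2<1$ is precisely what makes the numerics close.
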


\begin{proof}
Note that this is trivially true by $\cref{lem:l2-bound-user-weights}$ if $|S_v| = d_v > \dmax$ or $d_v < \ceil{\frac{1}{(\bmin)^2}}$ since $v$ does not participate in adaptivity in this case. 
We will proceed by assuming this is not the case.

Our goal is to bound the $\ell_2$ norm of the difference $\Delta = w' - w$, the difference in final weights with and without the new user $v$. We will use the notation $\Delta_{subscript} = w'_{subscript} - w_{subscript}$.
Note that $\Delta_{reroute} = w'_{reroute} - w_{reroute}$ is the additional rerouted weight after adding user $v$ and let $\Delta_{user} = \Delta - \Delta_{reroute}$ be the rest of the difference.
Note that $\Delta_{user}$ is $d_v$-sparse and only has nonzero entries on $S_v$, the items of the new user.
Our goal will be to bound the $\ell_2$ norms of $\Delta_{reroute}$ and $\Delta_{user}$, thus bounding the $\ell_2$ sensitivity of the algorithm by triangle inequality.

We start by tracking the excess weight created by $v$ which will be useful in bounding the $\ell_2$ norms of both $\Delta_{reroute}$ and $\Delta_{user}$.
It is the total amount of weight added by $v$ to items that exceed the threshold\footnote{If if an item $i$ only exceeds the threshold due to the addition of $v$, we only consider the allocated weight to $i$ above the threshold. }:
\begin{equation}\label{eq:excess}
    \gamma = \|\Delta_{init} - \Delta_{trunc}\|_1
\end{equation}
(Note that this is not the same as $e'_v$ which is the amount of weight from $v$ gets returned to reroute.)

The total amount of weight that is returned to users to reroute is equal to the amount of weight truncated, i.e., the sum of $w_{init} - w_{trunc}$:
\begin{align*}
    \sum_{u=1}^n e_u &= \sum_{u=1}^n (1/|S_u|) \sum_{i \in S_u} r(i) \\
    &= \sum_{u=1}^n (1/|S_u|) \sum_{i \in S_u} \max\left\{0, \frac{w_{init}(i) - \tau}{w_{init}(i)}\right\} \\
    &= \sum_{u=1}^n (1/|S_u|) \sum_{i \in S_u} \frac{w_{init}(i) - w_{trunc}(i)}{w_{init}(i)} \\
    &= \sum_{u=1}^n (1/|S_u|) \sum_{i \in S_u} \frac{w_{init}(i) - w_{trunc}(i)}{\sum_{w : i \in S_w} 1/|S_w|} \\
    &= \sum_{i \in \U} \frac{(w_{init}(i) - w_{trunc}(i)) \sum_{u: i \in S_u} 1/|S_u|}{\sum_{w : i \in S_w} 1/|S_w|} \\
    &= \sum_{i \in \U} w_{init}(i) - w_{trunc}(i).
\end{align*}

For notational parsimony, let $e_v = 0$ (as $v$ does not appear in the original input $\sets$).
Note that $e_u$ is monotonically increasing with $w_{init}$: if any coordinate of the initial weight increases, the excess ratio of any user will never decrease.
With monotonicity of $w_{init}$ from \cref{lem:monotone}, it follows that $w'_{init} - w'_{trunc} \geq w_{init} - w_{trunc}$ since the threshold $\tau$ in the capping formula  $w_{trunc}(i) \gets \min\{w_{init}(i), \tau\}$ stays the same after adding the new user. Consequently, we have:
\begin{align*}
    \gamma
    &= \|\p{w'_{init} - w'_{trunc}} - \p{w_{init} - w_{trunc}}\|_1 \\
    &= \p{\sum_{i \in \U'} w'_{init} - w'_{trunc}} - \p{\sum_{i \in \U} w_{init} - w'_{trunc}} \\
    &= \sum_{u \in \{1, \ldots, n, v\}} e'_u - e_u.
\end{align*}

Now, we will consider $\Delta_{reroute}$.
Recall that $|S_u| \leq \dmax$ for all $u$ participating in adaptivity. For all other users, the terms $e_u$ and $e'_u$ are zero.
\begin{align*}
    \|\Delta_{reroute}\|_1 &= \sum_{u \in \{1, \ldots, n, v\}} \sum_{i \in S_u} (\alpha / \dmax) (e'_u - e_u) \\
    &= \sum_{u \in \{1, \ldots, n, v\}} |S_u| (\alpha / \dmax) (e'_u - e_u) \\
    &\leq \sum_{u \in \{1, \ldots, n, v\}} \alpha(e'_u - e_u) \\ 
    &= \alpha \gamma.
\end{align*}
Furthermore, we can bound the $\ell_\infty$ norm of $\Delta_{reroute}$ as:
\begin{equation*}
    w'_{reroute}(i) - w_{reroute}(i) \leq (\alpha / \dmax) \sum_{u \in \{1, \ldots, n, v\}} e'_u - e_u = \alpha \gamma / \dmax.
\end{equation*}
By H\"older's inequality, 
\begin{equation}\label{eq:reroute-l2}
    \|\Delta_{reroute}\|_2 \leq \sqrt{\|\Delta_{reroute}\|_1 \|\Delta_{reroute}\|_\infty}
    = \sqrt{\alpha^2 \gamma^2 / \dmax} = \frac{\alpha \gamma}{\sqrt{\dmax}}.
\end{equation}

Consider the rest of the difference $\Delta_{user}$. For $i \in S_v$, a single coordinate of $\Delta_{user}$ will be comprised of the sum
\begin{equation*}
    \Delta_{user}(i) = \Delta_{trunc}(i) + w_b^v(i) - 1/d_v.
\end{equation*}
Note that $\Delta_{trunc}(i) \in [0,1/d_v]$, so
\begin{equation*}
    \Delta_{user}(i) \in \left[w_b^v(i) - 1/d_v, w_b^v(i)\right].
\end{equation*}

Furthermore, as $\Delta_{init}(i) = 1/d_v$,
\begin{equation*}
    \|\Delta_{user}\|_1 = 
    \sum_{i \in S_v} w_b^v(i) + \Delta_{trunc}(i) - \Delta_{init}(i)
    = \left(\sum_{i \in S_v} w_b^v(i) \right) - \gamma.
\end{equation*}

Let $x: S_v \rightarrow \mathbb{R}$ and $y: S_v \rightarrow \mathbb{R}$ be two sets of weights over $S_v$ such that $x(i) = w_b^v(i) - 1/d_v$, $y(i) \in [0, 1/d_v]$, and $\sum_{i \in S_v} x(i) + y(i) = \left( \sum_{i \in S_v} w_b^v(i) \right) - \gamma$.
Then,
\begin{equation*}
    \|\Delta_{user}\|_2 \leq \max_y \|x + y\|_2
\end{equation*}
as any valid $\Delta_{user}$ can be expressed as the sum of $x$ and a choice of $y$ satisfying the above constraints.
Note that
\begin{equation*}
    \|y\|_1 = \left(\sum_{i \in S_v} w_b^v(i) \right) - \gamma - \sum_{i \in S_v} x(i)
    = \left(\sum_{i \in S_v} w_b^v(i) \right) - \gamma - \sum_{i \in S_v} w_b^v(i) + 1
    = 1 - \gamma
\end{equation*}
and by H\"older's inequality, 
\begin{equation*}
    \|y\|_2 \leq \sqrt{\|y\|_1 \|y\|_\infty} = \sqrt{\frac{1 - \gamma}{d_v}}.
\end{equation*}
Then,
\begin{align*}
    \|x + y\|_2^2 &= \sum_{i \in S_v} \left(w_b^v(i) - 1/d_v + y(i)\right)^2 \\
    &= \sum_{i \in S_v} w_b^v(i)^2 + \frac{1}{d_v^2} + y(i)^2 - \frac{2 w_b^v(i)}{d_v} + 2 y(i) \cdot w_b^v(i) - \frac{2y(i)}{d_v} \\
    &= 1 + \frac{1}{d_v} + \|y\|_2^2 - \frac{2}{d_v} \cdot \|w_b^v\|_1  + 2 \langle y, w_b^v \rangle - \frac{2}{d_v} \cdot \|y\|_1  \\
    &\leq 1 + \frac{1}{d_v} + \frac{1-\gamma}{d_v} - \frac{2}{d_v} \cdot \|w_b^v\|_1  + 2 \langle y, w_b^v \rangle - \frac{2(1-\gamma)}{d_v}  \\
    &= 1 + \frac{1 + 1-\gamma - 2(1-\gamma)}{d_v}  - \frac{2}{d_v} \cdot \|w_b^v\|_1  + 2 \langle y, w_b^v \rangle   \\
    &= 1  + \frac{\gamma}{d_v} - \frac{2}{d_v} \cdot \|w_b^v\|_1  + 2 \langle y, w_b^v \rangle
\end{align*}

Every $y(i)$ can be written as $1/d_v - z(i)$ for some non-negative residual $z(i)$ with $\sum_{i \in S_v} z(i) = \gamma$. So we continue the above equations as follows: 

\begin{align*}
    \|x + y\|_2^2 
    &\leq  1  + \frac{\gamma}{d_v} - \frac{2}{d_v} \cdot \|w_b^v\|_1  + 2 \langle y, w_b^v \rangle\\
    &=  1  + \frac{\gamma}{d_v} - \frac{2}{d_v} \cdot \|w_b^v\|_1  + \frac{2}{d_v}  \cdot \|w_b^v\|_1 -  2\sum_{i \in S_v} z(i) w_b^v(i) \\
    &=  1  + \frac{\gamma}{d_v}  -  2\sum_{i \in S_v} z(i) w_b^v(i) \\
    &\leq  1  + \frac{\gamma}{d_v}  -  2 \cdot \frac{\bmin}{\sqrt{d_v}} \cdot \sum_{i \in S_v} z(i)  \\
    &= 1 - \frac{2 \bmin \gamma}{\sqrt{d_v}} + \frac{\gamma}{d_v}.
\end{align*}

The last inequality holds because   \cref{lem:l2-bound-user-weights} implies that 
$w_b^v(i) \geq \frac{\bmin}{\sqrt{d_v}}$. We conclude that: 
\[
\|\Delta_{user}\|_2 \leq \sqrt{1 - \frac{2 \bmin \gamma}{\sqrt{d_v}} + \frac{\gamma}{d_v}}.
\]

We can now bound the $\ell_2$-sensitivity of the entire algorithm as
\begin{equation*}
    \|\Delta\|_2 \leq \|\Delta_{reroute}\|_2 + \|\Delta_{user}\|_2 \leq 
    \frac{\alpha \gamma}{\sqrt{\dmax}} + \sqrt{1 - \frac{2 \bmin \cdot \gamma}{\sqrt{d_v}} + \frac{\gamma}{d_v}}.
\end{equation*}
As the expression $-\frac{2 \bmin}{\sqrt{d_v}} + \frac{1}{d_v}$ is increasing for $d_v \geq \frac{1}{\bmin^2}$, the right hand side is maximized with $d_v = \dmax$:

\begin{equation*}
    \|\Delta\|_2 \leq \frac{\alpha \gamma}{\sqrt{\dmax}} + \sqrt{1 - \frac{2 \bmin \cdot \gamma}{\sqrt{\dmax}} + \frac{\gamma}{\dmax}}.
\end{equation*}
Our goal is to choose $\alpha \in [0,1]$ such that the right hand side is upper bounded by $1$.
We note that for $\gamma = 0$, the above inequality proves this desired upper bound.
For other cases, $\gamma \in (0,1]$, it is achieved when,
\begin{equation}
    \alpha \leq \frac{\sqrt{\dmax}}{\gamma} \left(1 - \sqrt{1 - \frac{2 \bmin \cdot \gamma}{\sqrt{\dmax}} + \frac{\gamma}{\dmax}} \right).
\end{equation}
By \cref{lem:gen-limit} with $C=2 \bmin$ and using the restrictions $\dmax > 1$ and $\frac{1}{2} \leq \bmin \leq 1$, it suffices to choose
\begin{equation}\label{eq:alpha}
    \alpha = \bmin - \frac{1}{2\sqrt{\dmax}}.
\end{equation}
\end{proof}

\begin{proof}[Proof of \cref{thm:dp-two-rounds}]
Note that both rounds of \ouralgotworounds{} (\cref{alg:max-deg-two-round}) correspond to running the \selectalgo{} meta-algorithm (\cref{alg:generic}) with privacy parameters $(\eps_1, \delta_1)$ and $(\eps_2, \delta_2)$, respectively.
The only difference is that we only materialize $\tilde{w}_1$ rather than the entire vector $\tilde{w}_{ext}$ from the first round. The functionality of our algorithm would be equivalent if we instead materialized the full vector as we only query weights on items in $\U$ and we never output the vector.
Therefore, we will invoke \cref{thm:generic-privacy} twice and apply basic composition to prove the privacy of \ouralgotworounds{}.
By \cref{thm:generic-privacy}, it suffices to show that the $\ell_2$ and novel $\ell_\infty$ sensitivities of the weight algorithm \ouralgo{} are bounded by $1$ and $\frac{\bmax}{\sqrt{t}}$, respectively.
This follows directly from \cref{lem:linf-sensitivity-biased}, and \cref{lem:l2-sensitivity-biased-weights}.
\end{proof}

\subsection{Technical Lemma}

\begin{proposition}[Taylor expansion of $\sqrt{1+x}$ as $x \to 0$]\label{prop:taylor-sqrt}
\[
\lim_{x \to 0} \sqrt{1 + x} = \sum_{n=0}^\infty \frac{\prod_{k=1}^n \left(\frac{3}{2} - k\right)}{n!} x^n.
\]
\end{proposition}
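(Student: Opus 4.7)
The plan is to recognize this as the standard Maclaurin series of $f(x) = (1+x)^{1/2}$, so the natural route is Taylor's theorem at the origin. First, I would compute $f^{(n)}$ by a straightforward induction on $n$. Differentiating once gives $f'(x) = \tfrac{1}{2}(1+x)^{-1/2}$, and an easy induction yields
\[
  f^{(n)}(x) \;=\; \left[\prod_{k=1}^{n}\left(\tfrac{3}{2}-k\right)\right](1+x)^{1/2 - n},
\]
since each successive differentiation multiplies by the current exponent and decrements it by $1$, and $\tfrac{1}{2}-(n-1)=\tfrac{3}{2}-n$. Evaluating at $0$ then gives the coefficients $f^{(n)}(0)/n! = \prod_{k=1}^{n}(\tfrac{3}{2}-k)/n!$ that appear on the right-hand side.

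Next, I would assemble the formal Maclaurin series and argue that it equals $f(x)$ in a neighborhood of $0$. The cleanest way is to fix any $|x|<1$ and apply Taylor's theorem with the Lagrange form of the remainder: there exists $\xi$ between $0$ and $x$ with
\[
  R_N(x) \;=\; \frac{f^{(N+1)}(\xi)}{(N+1)!}\,x^{N+1}
  \;=\; \frac{\prod_{k=1}^{N+1}\!\left(\tfrac{3}{2}-k\right)}{(N+1)!}\,(1+\xi)^{1/2-(N+1)}\,x^{N+1}.
\]
A ratio-test style bound on the coefficient shows $|R_N(x)| \to 0$ as $N \to \infty$ whenever $|x|<1$, proving that the series converges to $\sqrt{1+x}$ on that interval; taking $x\to 0$ (or, more precisely, reading off the identity as a power series identity valid near $0$) gives the stated claim.

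Alternatively, one can cite the generalized binomial theorem $(1+x)^{\alpha} = \sum_{n\ge 0}\binom{\alpha}{n}x^n$ for $|x|<1$ with $\alpha=\tfrac{1}{2}$, and verify directly that $\binom{1/2}{n} = \prod_{k=1}^{n}(\tfrac{3}{2}-k)/n!$, which matches the right-hand side term by term. The only technical step requiring any care is the remainder bound (or equivalently the radius-of-convergence argument for the binomial series); everything else is a bookkeeping exercise in writing $\tfrac{1}{2}-(k-1)$ as $\tfrac{3}{2}-k$. Since this lemma is invoked only to justify the concrete choice of $\alpha$ in equation~\eqref{eq:alpha} via \cref{lem:gen-limit}, I would keep the presentation short and lean on the classical result rather than rederiving convergence from scratch.
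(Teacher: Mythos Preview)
Your proposal is correct; this is the standard Maclaurin/binomial-series argument. The paper itself does not prove this proposition at all---it simply states it as a known fact and invokes it inside \cref{lem:gen-limit}---so your write-up already provides more detail than the paper does, and either of your two routes (Lagrange remainder or citing the generalized binomial theorem with $\alpha=\tfrac12$) is perfectly adequate.
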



\begin{lemma}\label{lem:gen-limit}
For a constant $1\leq C \leq 2$, 
consider the following function of $x$ parameterized by an auxiliary variable $y$:
\begin{equation}\label{eq:fxy}
    f(x;y) = \frac{\sqrt{y}}{x}\left(1 - \sqrt{1 - \frac{Cx}{\sqrt{y}} + \frac{x}{y}}\right).
\end{equation}
For any $y > 1$,
\[
    \inf_{x \in (0,1]} f(x;y)
    = \frac{C}{2} - \frac{1}{2\sqrt{y}}.
\]
\end{lemma}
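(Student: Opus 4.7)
The approach is to rewrite $f$ via the substitution $a := \frac{C}{\sqrt{y}} - \frac{1}{y} = \frac{C\sqrt{y} - 1}{y}$, so that
\[
f(x;y) = \frac{\sqrt{y}}{x}\bigl(1 - \sqrt{1 - ax}\bigr),
\]
and the target infimum becomes exactly $\frac{\sqrt{y}\,a}{2}$. The plan then splits cleanly into three steps: (i) showing $a \in (0, 1]$ so that the expression is well defined on the whole domain $(0,1]$, (ii) computing the limit $\lim_{x \to 0^+} f(x;y) = \frac{\sqrt{y}\,a}{2}$, and (iii) establishing the matching pointwise lower bound $f(x;y) \geq \frac{\sqrt{y}\,a}{2}$ for every $x \in (0,1]$.

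For (i), since $C \geq 1$ and $y > 1$ we have $C\sqrt{y} > 1$, hence $a > 0$. To get $a \leq 1$, I would view $z^2 - Cz + 1$ as a quadratic in $z = \sqrt{y}$; since $C \leq 2$ its discriminant $C^2 - 4$ is non-positive, so $z^2 - Cz + 1 \geq 0$, equivalently $y - C\sqrt{y} + 1 \geq 0$, giving $a \leq 1$. Consequently $ax \in (0,1]$ throughout the domain and $\sqrt{1-ax}$ is real. For (ii), either L'H\^opital's rule applied to $\frac{1 - \sqrt{1-ax}}{x}$, or a first-order application of \cref{prop:taylor-sqrt}, yields $1 - \sqrt{1-ax} = \frac{ax}{2} + O(x^2)$ as $x \to 0^+$, so $\lim_{x \to 0^+} f(x;y) = \frac{\sqrt{y}\,a}{2} = \frac{C}{2} - \frac{1}{2\sqrt{y}}$.

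For (iii), the inequality $f(x;y) \geq \frac{\sqrt{y}\,a}{2}$ is equivalent to $\sqrt{1-ax} \leq 1 - \frac{ax}{2}$. Since $ax \leq 1$ implies $1 - \frac{ax}{2} \geq \frac{1}{2} > 0$, both sides are non-negative and may be squared, reducing to the trivial identity $1 - ax \leq 1 - ax + \frac{a^2 x^2}{4}$. Combined with (ii), this shows the infimum on $(0,1]$ equals $\frac{C}{2} - \frac{1}{2\sqrt{y}}$, approached in the limit $x \to 0^+$ (but not attained).

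The only step with any real content is (i): verifying $a \leq 1$ is precisely where the hypothesis $C \leq 2$ enters, and without it the square root would cease to be real near $x = 1$. Everything else is a one-line estimate, so I expect no substantial obstacle beyond this bookkeeping.
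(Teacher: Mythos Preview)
Your proof is correct and takes a cleaner route than the paper's. The key move is the substitution $a = \frac{C}{\sqrt{y}} - \frac{1}{y}$, which linearizes the radicand and reduces the whole question to the elementary inequality $\sqrt{1 - t} \le 1 - \tfrac{t}{2}$ for $t \in (0,1]$; from there the infimum and its limiting value fall out in one line. The paper instead argues by calculus: it computes $\partial_x f$, shows that stationary points would force $y = 1/C^2 \le 1$ (so none exist in the regime $y > 1$), and then checks the two boundary cases $x = 1$ and $x \to 0$ separately, the former via a longer algebraic chain terminating in $(C\sqrt{y} - 1)^2 \ge 0$, the latter via the Taylor expansion of \cref{prop:taylor-sqrt}. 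Your argument is shorter, avoids the derivative entirely, and has the additional virtue of making explicit exactly where the hypothesis $C \le 2$ is used (to ensure $a \le 1$ so the radicand stays non-negative on all of $(0,1]$); the paper's proof needs this too but leaves it implicit.
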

\begin{proof}
To minimize $f$, we will evaluate the function at any stationary points (in terms of $x$) as well as the boundaries $x = 1$ and $x \to 0$.
Consider the derivative
\begin{equation*}\label{eq:derivative}
    \frac{d}{d x}f(x;y) = - \frac{\sqrt{y}}{x^2}\left(1 - \sqrt{1 - \frac{C x}{\sqrt{y}} + \frac{x}{y}}\right) + \frac{\sqrt{y}}{x} 
    \left(\frac{\frac{C}{\sqrt{y}} - \frac{1}{y}}{2\sqrt{1 - \frac{C x}{\sqrt{y}} + \frac{x}{y}}}\right).
\end{equation*}

Let $A = \sqrt{1 - \frac{C x}{\sqrt{y}} + \frac{x}{y}}$.
To look for stationary points and will set the derivative of $f$ to zero:
\[
 \frac{d}{d x}f(x;y) = 0 \iff
 -\frac{1}{x} (1-A) + \frac{\frac{C}{\sqrt{y}} - \frac{1}{y}}{2A} = 0 \iff
 x \left(\frac{C}{\sqrt{y}} - \frac{1}{y}\right) <  2A(1-A)
\]
We expand $A^2$ to get the simpler form:
\begin{align*}
    \frac{C x}{\sqrt{y}} - \frac{x}{y} = 2A - 2\left(1 - \frac{C x}{\sqrt{y}} + \frac{x}{y}\right)
    &\iff 0 = 2A - 2 + \frac{C x}{\sqrt{y}} - \frac{x}{y} = -A^2 + 2A - 1 = -(A - 1)^2 \\
    &\iff A = 1.
\end{align*}
From the definition of $A$,
\[
    A = 1 \iff
    1 - \frac{C x}{\sqrt{y}} + \frac{x}{y} = 1 \iff
    C x \sqrt{y} = x \iff
    y = \frac{1}{C^2}.
\]
As $C \geq 1$, in the parameter regime $y > 1$, $f$ has no stationary points.

It remains to check the boundary point $x = 1$ and the function $f$ in the limit as $x \to 0$.
For $x=1$, the claim is reduced to this simple inequality: 
\begin{align*}
    \sqrt{y}\left(1 - \sqrt{1 - \frac{C}{\sqrt{y}} + \frac{1}{y}}\right) \geq \frac{C}{2} - \frac{1}{2\sqrt{y}} 
    &\iff 2y\left(1 - \sqrt{1 - \frac{C}{\sqrt{y}} + \frac{1}{y}}\right) \geq C\sqrt{y} - 1 \\
    &\iff (2y - C\sqrt{y} + 1)^2 \geq  4y^2 \left(1 - \frac{C}{\sqrt{y}} + \frac{1}{y}\right) \\
    &\iff 4y^2 + C^2y + 1 - 4Cy\sqrt{y} + 4y - 2C\sqrt{y} \geq 4y^2 - 4Cy\sqrt{y} + 4y \\
    &\iff 1 + C^2y -2C\sqrt{y} \geq 0 \\
    &\iff (C\sqrt{y}-1)^2 \geq 0
\end{align*}
which holds for any value of $y$.
In the rest of the proof, we focus on the limit as $x \to 0$.

Via the Taylor expansion of \cref{prop:taylor-sqrt},
\begin{align*}
    \lim_{x \to 0} f(x;y) &= \lim_{x \to 0} \frac{\sqrt{y}}{x} \left(1 - \sum_{n=0}^\infty \frac{\prod_{k=1}^n \left(\frac{3}{2} - k\right)}{n!} \left(-\frac{Cx}{\sqrt{y}} + \frac{x}{y}\right)^n \right)\\
    &= \lim_{x \to 0} \frac{\sqrt{y}}{x} \sum_{n=1}^\infty - \frac{\prod_{k=1}^n \left(\frac{3}{2} - k\right)}{n!} \left(-\frac{Cx}{\sqrt{y}} + \frac{x}{y}\right)^n.
\end{align*}
Note that the coefficients in the summation are upper bounded in magnitude by $1$ as the sequence of terms in the descending factorial in the numerator is dominated by the sequence in the factorial in the denominator.
We also note that the absolute value of the coefficient of the first term is a constant, i.e. $\frac{1}{2}$. 
So, in the limit, the summation is dominated by the lowest order terms with respect to $x$ which correspond to $n=1$. In this case, the coefficient is $-\frac{1}{2}$ and the limit evaluates to
\begin{equation*}
    \lim_{x \to 0} f(x;y) = \frac{\sqrt{y}}{{x}} \left(\frac{Cx}{2\sqrt{y}} - \frac{x}{2y}\right) = \frac{C}{2} - \frac{1}{2\sqrt{y}}.
\end{equation*}
\end{proof}

\section{Utility Analysis}\label{sec:utility}
\subsection{Stochastic Dominance Proof}
\begin{theorem}\label{thm:maxalgo-perf}
Let $\beta \geq 0$ be the parameter controlling the adaptive threshold excess.
Let $U$ be the set of items output when using \basicalgo{} as the weighting algorithm and let $U^*$ be the set of items output when using unbiased \ouralgo{} as the weighting algorithm.
Then, for items $i \in \U$,
\begin{itemize}
    \item If $\Pr\p{i \in U} < \Phi(\beta)$, then $\Pr\p{i \in U^*} \geq \Pr\p{i \in U}$.
    \item Otherwise, $\Pr\p{i \in U^*} \geq \Phi(\beta)$.
\end{itemize}
\end{theorem}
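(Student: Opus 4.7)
The plan is to compare the pre-noise weight vectors of the two algorithms directly. Since both algorithms add i.i.d.\ noise $\mathcal{N}(0,\sigma^2)$ to each coordinate and apply the same threshold $\rho$, the probability that item $i$ is output is $\Phi\p{(w(i)-\rho)/\sigma}$, where $w(i)$ is the pre-noise weight. Writing $\tau=\rho+\beta\sigma$, the two cases in the theorem correspond exactly to whether $w^{basic}(i)<\tau$ or $w^{basic}(i)\geq \tau$, so the whole argument reduces to a deterministic comparison of $w^*(i)$ and $\max\bc{w^{basic}(i),\tau}$.

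First I would write down a clean decomposition of the MAD output. Summing all four contributions in \cref{alg:max-deg} (truncated weights, rerouted weights, the $w_b^u - 1/|S_u|$ correction for adaptive users, and $w_b^u$ for non-adaptive users, with $w_b^u(i) = 1/\sqrt{|S_u|}$ in the unbiased case), the user-side terms telescope into
\[
    w^*(i) \;=\; w_{trunc}(i) + w_{reroute}(i) + \bigl(w^{basic}(i) - w_{init}(i)\bigr),
\]
where $w^{basic}(i)=\sum_{u:\,i\in S_u} 1/\sqrt{|S_u|}$ and $w_{init}(i)=\sum_{u\in I_{adapt}:\,i\in S_u} 1/|S_u|$. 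The key elementary inequality is $w_{init}(i)\leq w^{basic}(i)$, which follows from $1/|S_u|\leq 1/\sqrt{|S_u|}$ together with $I_{adapt}\subseteq[n]$. I would also record that $w_{reroute}(i)\geq 0$ by construction.

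Next I would handle the two cases using only the identity above and $w_{trunc}(i)=\min\{w_{init}(i),\tau\}$. If $w^{basic}(i)<\tau$, then $w_{init}(i)<\tau$ so no truncation happens on $i$, the truncation term cancels the initial-weight correction, and we get $w^*(i) = w^{basic}(i) + w_{reroute}(i) \geq w^{basic}(i)$; adding the same Gaussian immediately gives $\Pr(i\in U^*)\geq \Pr(i\in U)$. If $w^{basic}(i)\geq\tau$, then either $w_{init}(i)\leq\tau$, in which case again $w^*(i)=w^{basic}(i)+w_{reroute}(i)\geq \tau$, or $w_{init}(i)>\tau$, in which case $w^*(i)=\tau + w_{reroute}(i) + (w^{basic}(i)-w_{init}(i))\geq \tau$ because both added terms are non-negative. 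Either way, $w^*(i)\geq\tau$, and
\[
    \Pr(i\in U^*) \;=\; \Pr\bigl(w^*(i) + N \geq \rho\bigr) \;\geq\; \Pr(\tau + N \geq \rho) \;=\; \Pr(N\geq -\beta\sigma) \;=\; \Phi(\beta),
\]
where $N\sim\mathcal{N}(0,\sigma^2)$.

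The main obstacle, such as it is, is just the bookkeeping of the telescoping decomposition of $w^*(i)$, since one has to carefully separate the contributions of adaptive versus non-adaptive users and notice that the initial-weight subtraction on adaptive users exactly produces $w^{basic}(i)-w_{init}(i)$. Once that identity is in hand the proof is purely algebraic and uses no probabilistic argument beyond a single monotonicity step for the Gaussian CDF.
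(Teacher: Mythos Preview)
Your proof is correct and follows essentially the same argument as the paper: both rely on the decomposition $w^*(i) = w_{trunc}(i) + w_{reroute}(i) + \bigl(w^{basic}(i) - w_{init}(i)\bigr)$ together with $w_{reroute}(i)\ge 0$ and $w_{init}(i)\le w^{basic}(i)$. The only cosmetic difference is that you case-split on $w^{basic}(i)$ versus $\tau$ (which lines up directly with the two bullets of the theorem), whereas the paper case-splits on $w_{init}(i)$ versus $\tau$ and leaves the reader to map those cases back to the statement; your organization is slightly cleaner but the mathematics is identical.
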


\begin{proof}[Proof of \cref{thm:maxalgo-perf}]
Let $w$ and $w^*$ be the weights produced by \basicalgo{} and \ouralgo{}, respectively.
Let $I_{adapt} = \{u \in [n] : |S_u| \leq \dmax\}$ be the items which participate in adaptive rerouting.
For any item $i \in \U$, we will consider it's initial weight under the adaptive algorithm:
\[
    w^*_{init}(i) = \sum_{u \in I_{adapt}: i \in S_u} \frac{1}{|S_u|}.
\]
We will proceed by cases.
\paragraph{Case 1:} $w^*_{init}(i) \leq \tau$. In this case, $w^*(i) \geq w(i)$. In the adaptive algorithm, no weight is truncated from the initial weights, and so each user contributes to the final weight of an item $\frac{1}{\sqrt{|S_u|}}$ plus rerouted weight from other items. As the weight on item $i$ only increases for the adaptive algorithm compared to the basic algorithm, the probability of outputting $i$ also can only increase.

\paragraph{Case 2:} $w^*_{init}(i) > \tau$. In this case in the adaptive algorithm, the initial weight is truncated to $\tau$, excess weight is rerouted, and a final addition of $\frac{1}{\sqrt{|S_u|}} - \frac{1}{|S_u|}$ is added. As $|S_u| \geq 1$, the final weight $w^*(i) \geq \tau$.
Then, $i \in U^*$ if the added Gaussian noise does not drops the weight below the threshold $\rho$, i.e., if the noise is greater than or equal to 
\[
    \rho - \tau = \rho - (\rho + \beta \sigma) = -\beta \sigma.
\]
As the noise has zero mean and standard deviation $\sigma$, this probability is exactly $1 - \Phi(-\beta) = \Phi(\beta)$.
\end{proof}

\subsection{Example showing a gap between \ouralgo{} and parallel baselines}
While \cref{thm:maxalgo-perf} bounds the worst-case behavior of our algorithm compared to the basic algorithm, as \ouralgo{} increases the weight of items below $\tau$ compared to the basic algorithm, it will often have a larger output. We show a simple, explicit example where our algorithm will substantially increase the output probability of all but one item.

Here, we demonstrate an explicit setting where \ouralgo{} outperforms the baselines.
There are $n$ users each with degree $3$ as well as a single heavy item $i^*$ and $m$ light items. Each user's set is comprised of $i^*$ as well as two random light items. Under the basic algorithm, each user will contribute $\nicefrac{1}{\sqrt{3}}$ to each of their items. Therefore, the weights under \basicalgo{} are
\begin{equation*}
    w(i) = 
    \begin{cases}
        \frac{n}{\sqrt{3}} & \text{if } i = i^* \\
        \frac{2n}{\sqrt{3}m} < \frac{1.16n}{m} & o.w.
    \end{cases}.
\end{equation*}

On the other hand, assuming $n >> \tau$, \ouralgo{} will reroute almost all of the initial weight on the heavy item back to the users, so each user will have excess weight approximately $\nicefrac{1}{3}$. For $\dmax=3$, we get discount factor $\alpha > 0.5$. So, each user will send approximately $\nicefrac{1}{18}$ weight to each of its items. The weights under \ouralgo{} are
\begin{equation*}
    w(i) = 
    \begin{cases}
        \tau + \frac{n}{18} & \text{if } i = i^* \\
        \frac{2n}{m}\left(\frac{1}{\sqrt{3}} + \frac{1}{18}\right) < \frac{1.27n}{m} & o.w.
    \end{cases}.
\end{equation*}

In this setting, our algorithm will assign close to $10\%$ more weight to the light items  (resulting in substantially higher probability of output) compared to the basic algorithm.
If $\nicefrac{n}{m}$ is close to the true threshold $\rho$, this gap will have a large effect on the final output size. We empirically validate this for $n=\num{15000}, m=1000, \eps=1, \delta=10^{-5}$. Our algorithm returns $610$ items on average. The basic algorithm returns $519$ items while DP-SIPS with a privacy split of $5\%, 15\%, 80\%$ or $10\%, 90\%$ returns $332$ or $514$ items, respectively. In all cases, as expected, our algorithm has significantly higher average output size.

\section{Experiments}\label{sec:experiments}
\begin{table}[ht]
\footnotesize
\begin{centering}
\begin{tabular}{| l | r | r | r | }
\hline
Dataset     & Users  & Items  & Entries  \\
\hline
Higgs       & \num{2.8e5}   & \num{5.9e4}     & \num{4.6e5} \\
IMDb        & \num{5.0e4}    & \num{2.0e5}    & \num{7.6e6} \\
Reddit      & \num{2.2e5}   & \num{1.5e5}    & \num{7.9e6} \\
Finance     & \num{1.4e6}  & \num{2.7e5}    & \num{1.7e7} \\
Wiki        & \num{2.5e5}   & \num{6.3e5}    & \num{1.8e7} \\
Twitter     & \num{7.0e5}   & \num{1.3e6}   & \num{2.7e7} \\
Amazon      & \num{4.0e6}  & \num{2.5e6}   & \num{2.4e8} \\
Clueweb     & \num{9.6e8} & \num{9.4e8} & \num{4.3e10} \\
Common Crawl & \num{2.9e9} & \num{1.8e9} & \num{7.8e11} \\
\hline
\end{tabular}
\caption{Number of distinct users, distinct items, and total entries (user, item pairs). The number of entries is the sum of the sizes of all the users' sets.}
\label{table:datasets}
\end{centering}
\end{table}

\begin{table*}[ht]
\begin{centering}
\footnotesize
\begin{tabular}{| l | l  l  l  l | l  l |}
\hline
\multirow{2}{*}{Dataset} & \multicolumn{4}{c |}{\emph{Parallel Algorithms}} & \multicolumn{2}{c |}{\emph{Sequential Algorithms}} \\
 & \ouralgo{} (ours) & \ouralgotworounds{} (ours) & \basicalgo{} & DP-SIPS & PolicyGaussian & GreedyUpdate \\
\hline
Higgs & \textbf{\num{1807}}$ \; \err{13}$ & $\num{1767} \; \err{15}$ & $\num{1791} \; \err{18}$ & $\num{1743} \; \err{8}$ & $\num{1923} \; \err{18}$ & $\underline{\num{2809}} \; \err{11}$ \\
IMDb & $\num{2516} \; \err{12}$ & \textbf{\num{3369}}$ \; \err{19}$ & $\num{2504} \; \err{7}$ & $\num{3076} \; \err{16}$ & $\underline{\num{3578}} \; \err{19}$ & $\num{1363} \; \err{11}$ \\
Reddit & $\num{4162} \; \err{19}$ & \textbf{\num{6215}}$ \; \err{18}$ & $\num{4062} \; \err{21}$ & $\num{5784} \; \err{30}$ & $\underline{\num{7170}} \; \err{39}$ & $\num{6340} \; \err{16}$ \\
Finance & $\num{12759} \; \err{16}$ & \textbf{\num{17785}}$ \; \err{28}$ & $\num{12412} \; \err{50}$ & $\num{16926} \; \err{18}$ & $\num{20100} \; \err{49}$ & $\underline{\num{23556}} \; \err{27}$ \\
Wiki & $\num{7812} \; \err{12}$ & \textbf{\num{10554}}$ \; \err{41}$ & $\num{7753} \; \err{36}$ & $\num{9795} \; \err{21}$ & $\underline{\num{11455}} \; \err{21}$ & $\num{4739} \; \err{14}$ \\
Twitter & $\num{9074} \; \err{23}$ & \textbf{\num{14064}}$ \; \err{13}$ & $\num{8859} \; \err{22}$ & $\num{13499} \; \err{50}$ & $\num{15907} \; \err{30}$ & $\underline{\num{15985}} \; \err{29}$ \\
Amazon & $\num{35797} \; \err{63}$ & \textbf{\num{67086}}$ \; \err{59}$ & $\num{35315} \; \err{69}$ & $\num{66126} \; \err{57}$ & $\num{77846} \; \err{127}$ & $\underline{\num{86841}} \; \err{95}$ \\
Clueweb & $\num{34692178}$ & $\num{34533524}$ & $\num{34603077}$ & \textbf{\num{34889208}} & -- & -- \\
Common Crawl & $\num{15815452}$ & \textbf{\num{29373829}} & $\num{15734148}$ & $\num{28328613}$ & -- & -- \\
\hline
\end{tabular}
\caption{Comparison of output size of DP partition selection algorithms with $\eps = 1$, $\delta=10^{-5}$, and $\Delta_0 = 100$. A standard hyperparameter setting is fixed for each algorithm, other than DP-SIPS, where the best result is taken from privacy splits $[0.1, 0.9]$ and $[0.05, 0.15, 0.8]$. For smaller datasets, sequential algorithms are also reported as oracles and results are averaged over $5$ trials with one standard deviation reported parenthetically. For each dataset, the best parallel result is bolded and the best sequential result is underlined.}
\label{table:results}
\end{centering}
\end{table*}

\begin{figure}[ht]
    \centering

    \begin{subfigure}{0.5\linewidth}
        \includegraphics[width=0.9\linewidth]{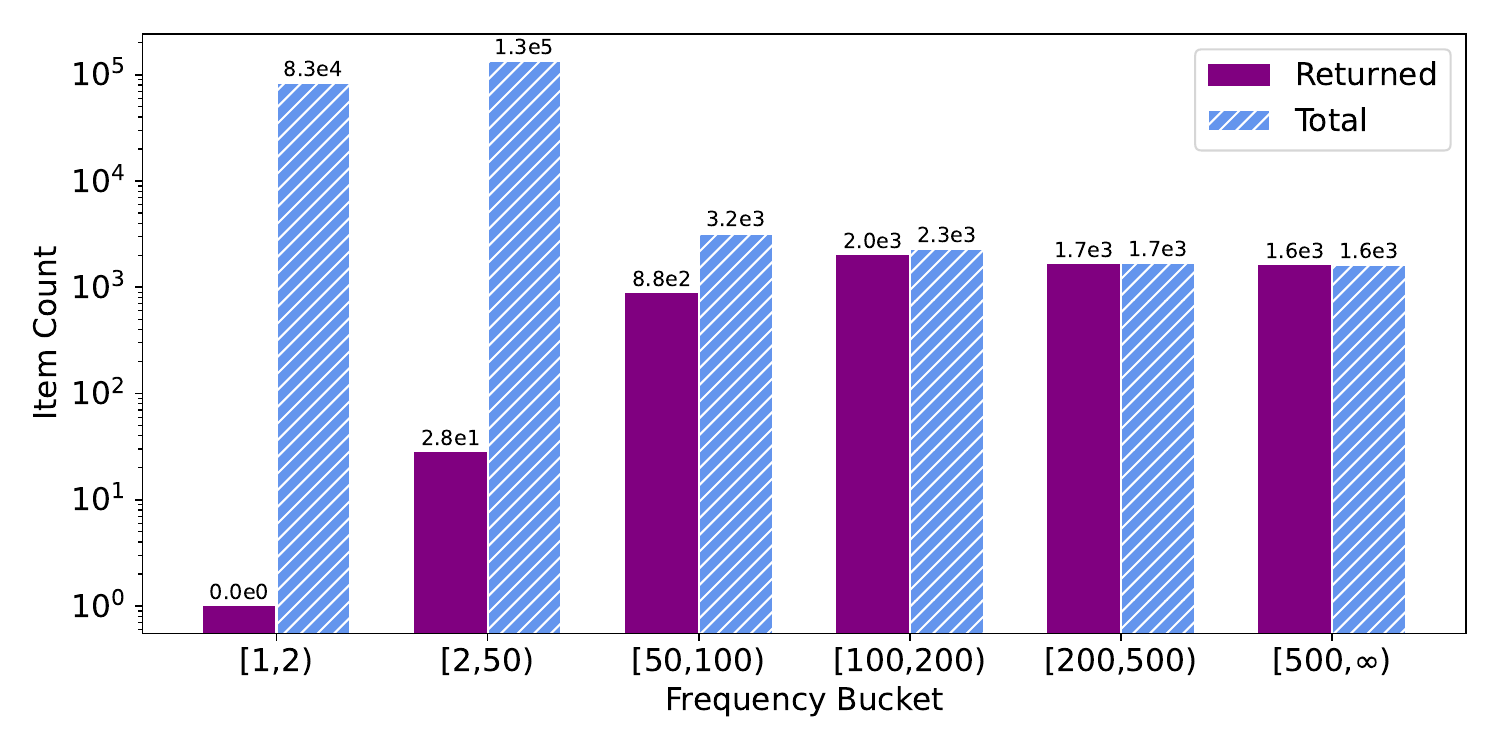}
        \caption{Reddit Item Coverage}
        \label{fig:reddit-coverage}
    \end{subfigure}

    \begin{subfigure}{0.5\linewidth}
        \includegraphics[width=0.9\linewidth]{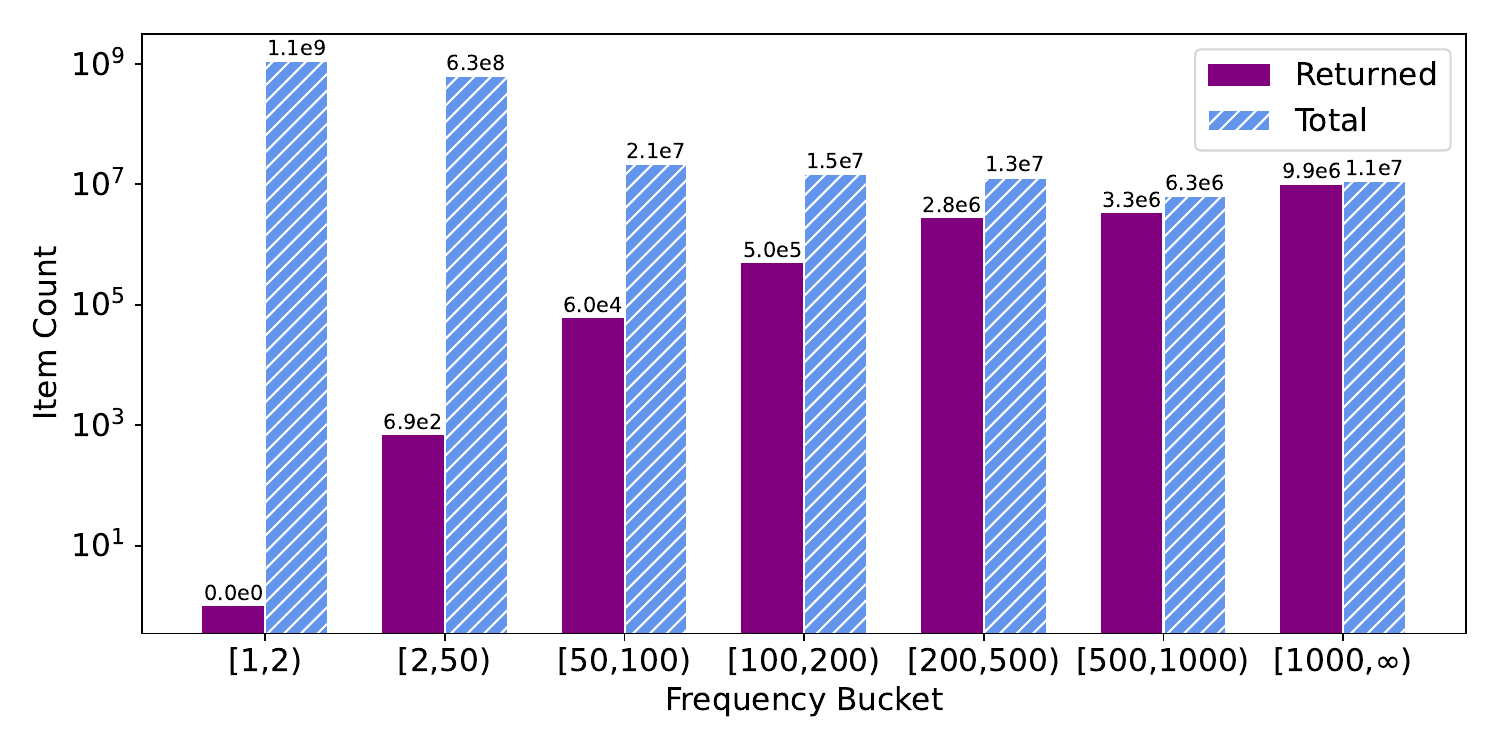}
        \caption{Common Crawl Item Coverage}
        \label{fig:commoncrawl-coverage}
    \end{subfigure}

    \caption{Comparison by item frequency of the output size of \ouralgotworounds{} to the total items on the Reddit and Common Crawl datasets. Parameters $\eps=1$ and $\Delta_0=100$ are fixed with $\delta = 10^{-5}$ for Reddit and $\delta = 10^{-11}$ for Common Crawl.}
    \label{fig:coverage}
\end{figure}

\begin{figure}[ht]
    \centering
    \includegraphics[width=0.5\textwidth]{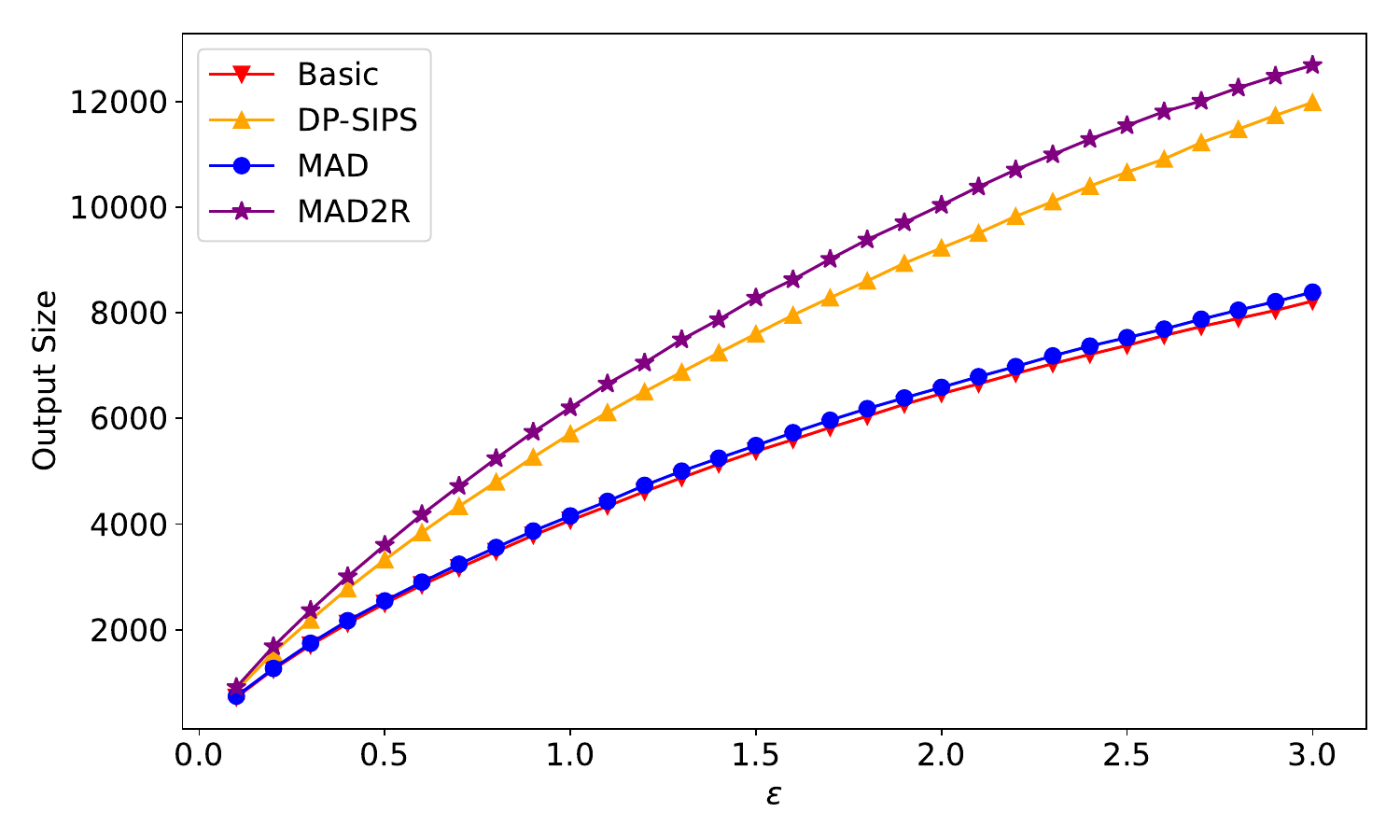}
    \caption{Comparison of output size across parallel algorithms while varying privacy parameter $\eps$ on the Reddit dataset. Other parameters are fixed as described in \cref{sec:experiment-parameters} with a fixed privacy split of $[0.1,0.9]$ for DP-SIPS and \ouralgotworounds{}. The relative performance of algorithms does not change with this parameter. Increasing $\eps$ significantly improves performance at the cost of privacy by lowering the required noise and threshold.}
    \label{fig:eps}
\end{figure}

\begin{figure}[ht]
    \centering
    \includegraphics[width=0.5\textwidth]{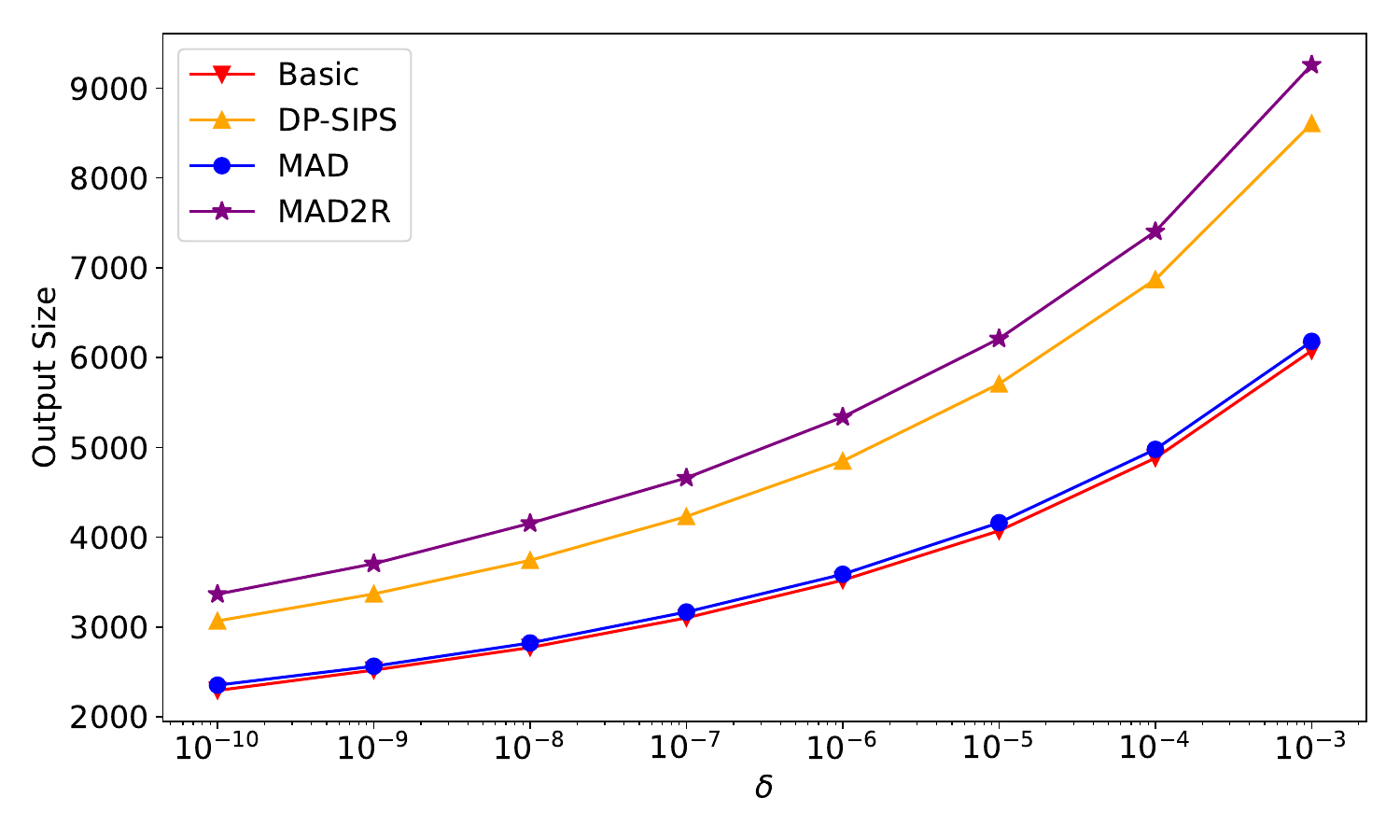}
    \caption{Comparison of output size across parallel algorithms while varying privacy parameter $\delta$ on a log-scale on the Reddit dataset. Other parameters are fixed as described in \cref{sec:experiment-parameters} with a fixed privacy split of $[0.1,0.9]$ for DP-SIPS and \ouralgotworounds{}. The relative performance of algorithms does not change with this parameter. Increasing $\delta$ significantly improves performance at the cost of privacy by lowering the required noise and threshold.}
    \label{fig:delta}
\end{figure}

\begin{figure}[ht]
    \centering
    \includegraphics[width=0.5\textwidth]{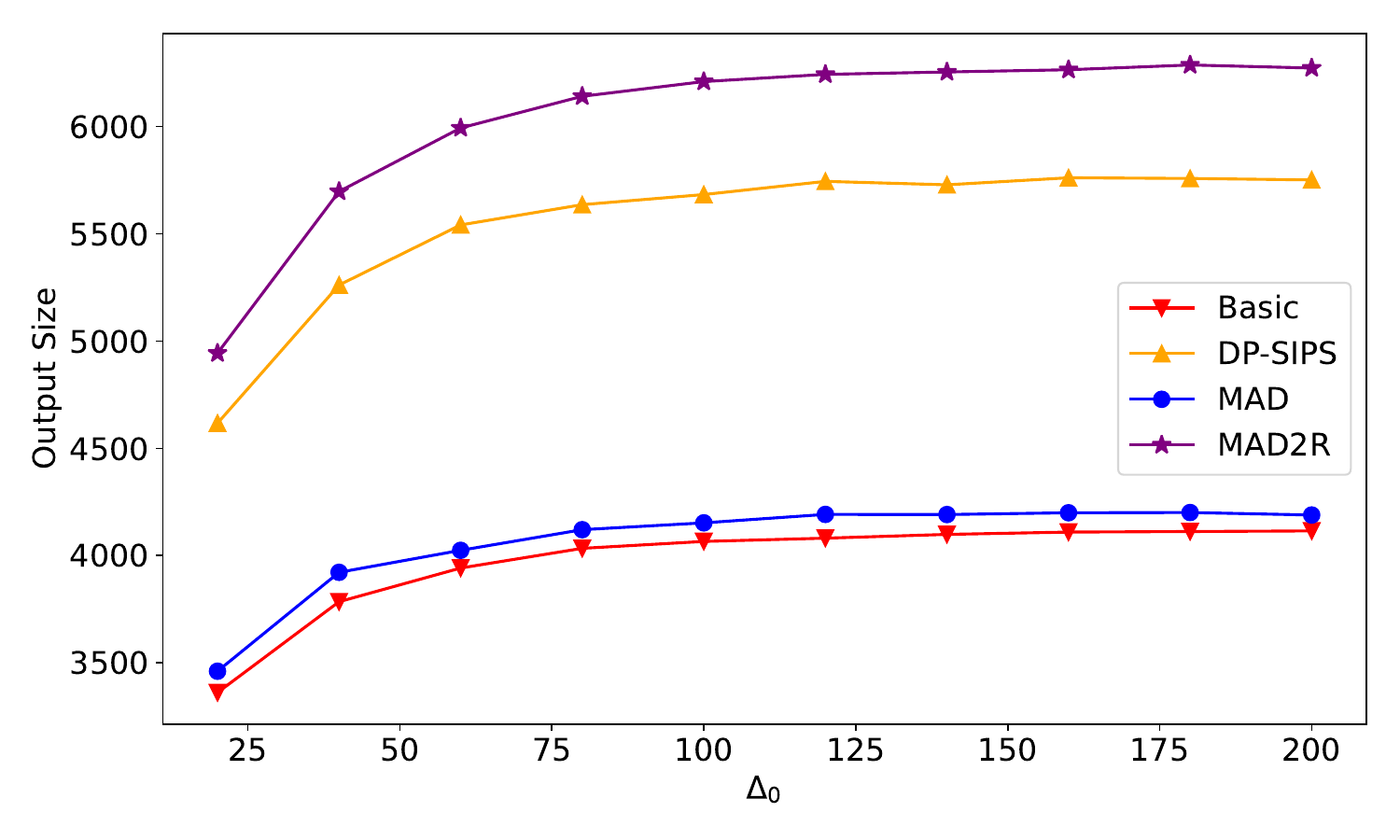}
    \caption{Comparison of output size across parallel algorithms while varying maximum set size parameter $\delta_0$ on the Reddit dataset. Other parameters are fixed as described in \cref{sec:experiment-parameters} with a fixed privacy split of $[0.1,0.9]$ for DP-SIPS and \ouralgotworounds{}. The relative performance of algorithms does not change with this parameter, and good results are achieved as long as it is not too small.}
    \label{fig:cap}
\end{figure}

\begin{figure}[ht]
    \centering
    \includegraphics[width=0.5\textwidth]{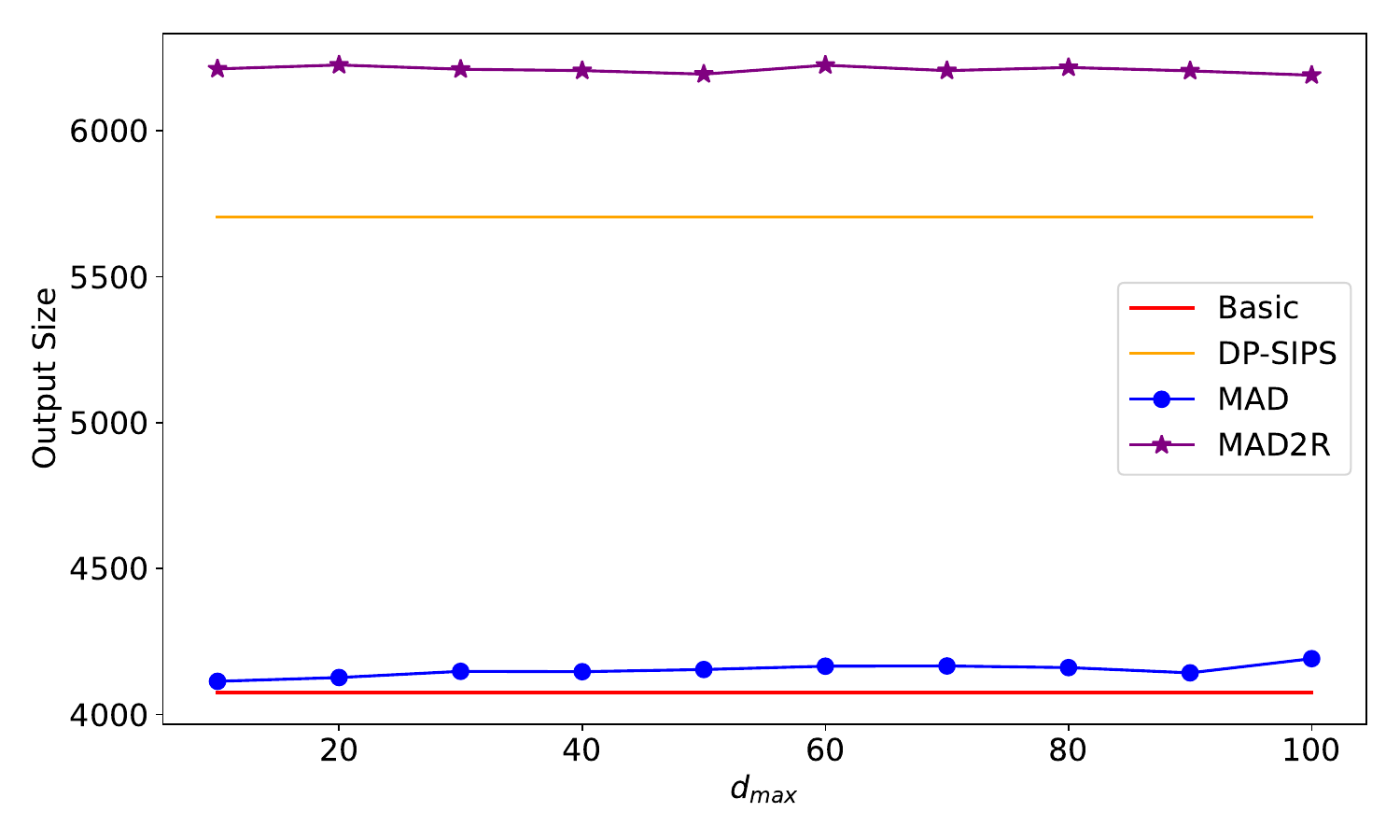}
    \caption{Comparison of output size across parallel algorithms while varying the parameter $\dmax$ of our algorithms on the Reddit dataset. As this parameter is only used by \ouralgo{} and \ouralgotworounds{}, the performance of the baselines is fixed. Other parameters are fixed as described in \cref{sec:experiment-parameters} with a fixed privacy split of $[0.1,0.9]$ for DP-SIPS and \ouralgotworounds{}. The performance of our algorithm is relatively insensitive to this parameter.}
    \label{fig:dmax}
\end{figure}
    
We now compare the empirical performance of \ouralgo{} and \ouralgotworounds{} against two parallel (\basicalgo{}, DP-SIPS) and two sequential algorithms (PolicyGaussian and GreedyUpdate) for the partition selection. We observe that our algorithms output most items (at parity of privacy parameter) among the parallel algorithms for every dataset and across various parameter regimes. Moreover, parallelization allows us to analyze datasets with up to $800$ billion entries, orders of magnitude larger than sequential algorithms. In the rest of the section, we describe the datasets, algorithms, and computational setting, before presenting our empirical results.

\subsection{Datasets}
We consider 9 datasets with statistics detailed in \cref{table:datasets}.
First, we consider small-scale datasets that are suitable for fast processing by sequential algorithms in a single-core architecture. These includes, for the sake of replicability, datasets used in prior works~\cite{gopi2020dpunion, carvalho2022incorporatingitem, swanberg2023dpsips}. These datasets have up to $3$ million distinct items and $300$ million entries.
Higgs~\cite{snapnets} is a dataset of Tweets during the discovery of the Higgs.
IMDb~\cite{imdb} is a dataset of movie reviews, Reddit~\cite{gopi2020dpunion} is a dataset of posts to \texttt{r/askreddit}, Finance~\cite{finance} is dataset of financial headlines, Wiki~\cite{wiki} is a dataset of Wikipedia abstracts, Twitter~\cite{twitter} is a dataset of customer support tweets, and Amazon~\cite{amazon1, amazon2} is a dataset of product reviews. 
For each of these text-based datasets we replicate prior methodology~\cite{gopi2020dpunion, carvalho2022incorporatingitem} where items represent the tokens used in a document and each document corresponds to a user (in some datasets, actual users are tracked across documents, in which case, we use combine the users' documents into one document).

We also consider two very-large publicly-available datasets Clueweb~\cite{BRSLLP} and Common Crawl\footnote{\url{https://www.commoncrawl.org/}}. The latter has approximately $2$ billion distinct items and $800$ billion entries. This is $3$ orders of magnitude larger than the largest dataset used in prior work.
Clueweb~\cite{BRSLLP} is a dataset of web pages and their hyper-links, items corresponds to the hyperlinks on a web page and each page corresponds to a user. 
Common Crawl is a very-large text dataset of crawled web pages often used in LLM research.

\subsection{Algorithms and Parameters}
\label{sec:experiment-parameters}

We compare our results to both sequential and parallel algorithms from prior work. The sequential algorithms we compare against are PolicyGaussian~\cite{gopi2020dpunion} and GreedyUpdate~\cite{carvalho2022incorporatingitem}. Like our algorithm, both algorithms set an adaptive threshold $\tau$ greater than the true threshold $\rho$. They try to maximize weight assigned to items up to but not exceeding $\tau$. PolicyGaussian goes through each user set one by one and adds $\ell_2$ bounded weight to minimize the $\ell_2$ distance between the current weight and the all $\tau$ vector, $w(i) = \tau \; \forall i \in \U$. GreedyUpdate goes through each user set one by one and increments the weight of a single item in the set by one, choosing an item whose weight is currently below $\tau$.\footnote{Unlike all of the other algorithm, this algorithm does not do a first step of bounding users' degrees by $\Delta_0$ as it only assigns weight to a single item per user by design.}
As observed before~\cite{swanberg2023dpsips}, sequential algorithms can have arbitrary long adpativity chains (the processing of each user can depend on all prior users processed) thus allowing larger output sizes than parallel algorithms. This, however, comes at the cost of not being parallelizable (as we observe in our experiments on the larger datasets).     
The parallel baselines we compare against are \basicalgo{}~\cite{korolova2009releasing, gopi2020dpunion} and DP-SIPS~\cite{swanberg2023dpsips}. In DP-SIPS, the privacy budget is split into a distribution over rounds. In each round, the basic algorithm is run with the corresponding privacy budget. Items found in previous rounds are removed from all user's sets for the next rounds.

We make parameter choices which are consistent with prior work and generally work well across datasets.
Unless otherwise specified, we use $\eps = 1$, $\delta=10^{-5}$, and $\Delta_0 = 100$.\footnote{We report these privacy settings for consistency with prior work in the literature, but observe the results are consistent across various choices. For real production deployments on large-scale sensitive data, $\delta$ is usually smaller.}
For PolicyGaussian and GreedyUpdate, we set the $\beta=4$ to be the number of standard deviations of noise to add to the base threshold to set the adaptive threshold.
For DP-SIPS, we take the best result of running with a privacy split of $[0.1, 0.9]$ and $[0.05, 0.15, 0.8]$\footnote{As this choice can have a significant effect on performance, we choose the best-performing to give this baseline the benefit of the doubt.}.
For \ouralgo{} and \ouralgotworounds{}, we set $\dmax=50$ and $\beta=2$.
For \ouralgotworounds{}, we set the privacy split of $[0.1,0.9]$, $\bmin=0.5$, $\bmax=2$, $C_{lb}=1$, and $C_{ub}=3$.

\subsection{Computing Details}
We perform experiments in two different computational settings. First we implement a sequential, in-memory version of all algorithms (including the parallel ones) using Python.\footnote{An open-source Python implementation of our algorithm is available at \url{https://github.com/jusyc/dp_partition_selection}.}.
For PolicyGaussian and GreedyUpdate we use the Python implementations from prior work~\cite{gopi2020dpunion, carvalho2022incorporatingitem}. This allows us to fairly test the scalability of the algorithms not using parallelism. As we observe next, this approach does not scale to the two largest datasets we have (Clueweb, Common Crawl). 

Then, we implement all parallel algorithms (\ouralgo{}, \ouralgotworounds{}, \basicalgo{}, DP-SIPS) using C++ in a modern multi-machine massively parallel computation framework in our institution. This framework allows to use a fleet of shared (x86\_64) architecture machines with 2.45GHz clocks. The machines are shared by several projects and can have up to 256 cores and up to 512GB of RAM. The jobs are dynamically allocated RAM, machines and cores depending on need and availability. 
As we observe, all parallel algorithm are very scalable and run on these huge datasets within 4 hours of wall-clock time. On the other hand, both sequential algorithms cannot exploit this architecture and could not complete in 16 hours on the Clueweb dataset (we estimate they would take several days to complete on the Common Crawl dataset even assuming access to enough memory).

\subsection{Results}
\paragraph{Algorithm Comparison}
\cref{table:results} displays the output size of the DP partition selection algorithms (i.e., the number of privatized items output).
Among parallel algorithms, \ouralgotworounds{} achieves the best result on seven out of nine datasets. The two exceptions are the Higgs dataset, where \ouralgo{} performs the best, and the Clueweb dataset, where DP-SIPS performs the best. Both of these datasets have outlier statistics (see \cref{table:datasets}): the average size of a user set in the Higgs dataset is less than $2$ and the number of unique items in the Clueweb dataset is less than the number of users.
Directly comparing \ouralgo{} with \basicalgo{}, \ouralgo{} is always better, corroborating our proof of stochastic dominance.
Comparing \ouralgotworounds{} with DP-SIPS, \ouralgotworounds{} is almost always significantly better, by up to a factor of a $9.5\%$ improvement on the IMDb dataset.

On the small scale datasets where we can run sequential algorithms, as expected from prior work~\cite{swanberg2023dpsips}, one of the two sequential algorithms yield the best results across all algorithms with PolicyGaussian consistently outperforming all parallel baselines. GreedyUpdate's performance is heavily dataset dependent, sometimes performing the best and sometimes the worst out of all algorithms.  This is not a surprise as the sequential algorithms utilize much more adaptivity than even our adaptive parallel algorithm at the cost of limiting scalability. Our algorithm is still competitive, never outputting fewer than $86\%$ of the items of PolicyGaussian (and outperforming GreedyUpdate on many datasets). 
For massive datasets, where it is simply infeasible to run the sequential algorithms, however \ouralgotworounds{} has the best results of all parallel algorithms.

Figures comparing output sizes while varying $\eps$, $\delta$, $\Delta_0$, and $\dmax$ are included in \crefrange{fig:eps}{fig:dmax}. The relative performance of the algorithms is the same across many choices.

\paragraph{Absolute Utility}
To understand the absolute utility of our algorithms (as opposed to relative to other baselines), we focus on the performance of \ouralgotworounds{} on the Reddit and Common Crawl datasets.
In order to understand the performance in a real deployment rather than compare baselines across common parameter settings, we change the $\delta$ for the large scale Common Crawl dataset to $\delta=10^{-11}$.

On the Reddit dataset, \ouralgotworounds{} outputs $\num{6340}$ out of $\num{143556}$ unique items ($4.4\%$).
On the other hand, $98\%$ of users have at least one outputted item, and $45\%$ of the entries (user-item pairs) belong to an item which is output by our algorithm.
The relatively small overall fraction of items output is due in part to the fact that the Reddit dataset has $58\%$ singleton items (items only appearing in a single user’s set). Any algorithm which outputs any singleton items is not private, as it is leaking private information belonging to a single user. For any algorithm with acceptable privacy settings, outputting items with very small frequencies is also simply not possible.
In \cref{fig:reddit-coverage}, we break down the number of items total in the dataset and the number output by \ouralgotworounds{} broken down by item frequency.
Our algorithm returns almost all of the items with frequency at least $100$.

On the Common Crawl dataset, \ouralgotworounds{} outputs $\num{16551550}$ out of $\num{1803720630}$ unique items ($0.9\%$). 
On the other hand, $99.9\%$ of users have at least one outputted item and $97\%$ of entries in the dataset belong to an item in output by our algorithm. This dataset contains $61\%$ singleton items, and many low frequency items.
In \cref{fig:commoncrawl-coverage}, we break down the number of items total in the dataset and the number output by \ouralgotworounds{} broken down by item frequency.
Our algorithm returns an overwhelming fraction of items occuring in at least $200$ user sets on a dataset with billions of users overall.

\section{Conclusion}
We introduce \ouralgo{} and \ouralgotworounds{}, new parallel algorithms for private partition selection which provide state-of-the-art results, scale to massive datasets, and provably outperform baseline algorithms.
Closing the remaining gap between parallel and sequential algorithms remains an interesting direction, as well as developing new ideas to adaptive route weight to items below the privacy threshold while maintaining bounded sensitivity.
While we are able to prove \emph{ordinal} theoretical results (our algorithm is at least as good as another), it is an open challenge to develop a framework where we can prove \emph{quantitative} results, perhaps comparing the competitive ratio of a private partition selection algorithm compared to some reasonably defined optimum.

\clearpage
\section*{Acknowledgements}
Justin Chen is supported by an NSF Graduate Research Fellowship under Grant No. 17453.

\bibliographystyle{alpha}
\bibliography{bib}


\clearpage
\appendix
\section{Basic Algorithm}

\begin{algorithm}[!ht]
\caption{\basicalgo
 \newline  {\bf Input: } User sets $\sets = \{(u, S_u)\}_{u \in [n]}$
 \newline  {\bf Output: } $w: \U \to \mathbb{R}$ weighting of the items
}
\begin{algorithmic}[1]
\label{alg:basic}
\STATE Initialize weight vector $w$ with zeros
\FORALL{$u \in [n]$}
    \STATE $w(i) \pluseq 1/\sqrt{|S_u|}$ for $i \in S_u$ \COMMENT{Add basic $\ell_2$ bounded weight.}
\ENDFOR
\RETURN $w$
\end{algorithmic}
\end{algorithm}


\end{document}